\documentclass[a4paper,USenglish,cleveref, autoref, thm-restate]{lipics-v2021}
\usepackage{booktabs} 
\usepackage{comment}
\usepackage{cite}
\usepackage{glossaries}
\usepackage{float}
\glsdisablehyper
\newacronym{pbs}{PBS}{Proposer-Builder Separation}
\newacronym{ofc}{OFC}{Order Flow Composition}
\newacronym{bft}{BFT}{Byzantine Fault Tolerant}
\newacronym{pos}{PoS}{Proof-of-Stake}
\newacronym{pow}{PoW}{Proof-of-Work}
\newacronym{mev}{MEV}{Maximal Extractable Value}
\newacronym[shortplural=DEXes]{dex}{DEX}{Decentralized Exchange}
\newacronym[shortplural=CEXes]{cex}{CEX}{Centralized Exchange} 
\newacronym{amm}{AMM}{Automated Market Maker}
\newacronym{defi}{DeFi}{Decentralized Finance}
\newacronym{tradfi}{TradFi}{Traditional Finance}
\newacronym{pga}{PGA}{Priority Gas Auction}
\newacronym{tob}{ToB}{Top-of-Block}
\newacronym{bob}{BoB}{Body-of-Block}
\newacronym{eob}{EoB}{End-of-Block}
\newacronym{ofa}{OFA}{Order Flow Auction}
\newacronym{0xce91228789b57deb45e66ca10ff648385fe7093b}{0xCe91228789B57DEb45e66Ca10Ff648385fE7093b}{MEV Blocker Rebates Safe}
\newacronym{rfq}{RFQ}{Request For Quote}
\newacronym{pm}{PM}{Profit Margin}
\newacronym{eoa}{EOA}{Externally Owned Account}
\newacronym{eof}{EOF}{Exclusive Order Flow}
\newacronym{msof}{MSOF}{Most Significant Order Flow}
\newacronym{jit}{JIT}{Just-In-Time}
\newacronym{lda}{LDA}{Linear Discriminant Analysis}
\newacronym{da}{DA}{Decoding Accuracy}
\newacronym{rpc}{RPC}{Remote Procedure Call}
\newacronym{brt}{BRT}{\texttt{beaverbuild}, \texttt{rysnc}, and \texttt{Titan}}
\newacronym{ep}{EP}{Exclusive Provider}
\newacronym{tee}{TEE}{Trusted Execution Environment}
\newacronym{il}{IL}{Inclusion List}
\newacronym{aps}{APS}{Attestor-Proposer Separation}
\newacronym{ea}{EA}{Execution Auction}
\newacronym{cr}{CR}{Censorship-Resistance}
\newacronym{hft}{HFT}{High-Frequency Trading}
\newacronym{l2}{L2}{Layer-2}
\newacronym{l1}{L1}{Layer-1}
\newacronym{bsc}{BSC}{Binance Smart Chain}
\newacronym{zk}{ZK}{Zero Knowledge}
\newacronym{mcp}{MCP}{Multiple Concurrent Proposers}
\newacronym{gcp}{GCP}{Google Cloud Platform}
\newacronym{tvl}{TVL}{Total Value Locked}
\newacronym{sia}{SIA}{Sequence-Independent Arbitrage}
\newacronym{sda}{SDA}{Sequence-Dependent Arbitrage}
\newacronym{cdf}{CDF}{Cumulative Distribution Function}
\newacronym{ci}{CI}{Confidence Interval}
\newacronym{fp}{FP}{False Positive}
\newacronym{fn}{FN}{False Negative}
\newacronym{clob}{CLOB}{Central Limit Order Book}
\newacronym{p2p}{P2P}{Peer-to-Peer}
\newacronym{eip}{EIP}{Ethereum Improvement Proposal}
\newacronym{abm}{ABM}{Agent-Based Modeling}
\newacronym{fdv}{FDV}{Fully Diluted Valuation}
\newacronym{cpmm}{CPMM}{Constant Product Market Maker}
\newacronym{fcfs}{FCFS}{First-Come, First-Served}
\newacronym{epbs}{ePBS}{enshrined Proposer-Builder Separation}
\newacronym{se}{SE}{Simulation Experiment}
\newacronym{exp}{EXP}{Experiment}
\newacronym{hhi}{HHI}{Herfindahl–Hirschman Index}
\newacronym{poa}{PoA}{Price of Anarchy}
\newcommand{\encircled}[2][0.9mm]{%
    \raisebox{.8pt}{%
        \textcircled{%
            \raisebox{0.35pt}{%
                \kern #1
                \scalebox{0.70}{#2}
            }%
        }%
    }%
}
\makeatletter
\newcommand*{\ensquared}[1]{\relax\ifmmode\mathpalette\@ensquared@math{#1}\else\@ensquared{#1}\fi}
\newcommand*{\@ensquared@math}[2]{\@ensquared{$\m@th#1#2$}}
\newcommand*{\@ensquared}[1]{%
\tikz[baseline,anchor=base]{\node[draw,outer sep=0pt,inner sep=0.6mm,minimum width=3.8mm] {#1};}} 
\makeatother

\hideLIPIcs  %uncomment to remove references to LIPIcs series (logo, DOI, ...), e.g. when preparing a pre-final version to be uploaded to arXiv or another public repository

\makeatletter
\@ifundefined{parhead}
  {\newcommand{\parhead}[1]{\par\noindent\textbf{#1.}}}
  {\renewcommand{\parhead}[1]{\par\noindent\textbf{#1.}}}
\makeatother
\title{The Price of Decentralization in Block Building}

\author{Burak {\"O}z\textsuperscript{*}}%
  {Flashbots, Germany}{burak@flashbots.net}{https://orcid.org/0009-0003-7508-7112}{}
  
\author{Fei Wu\textsuperscript{*\dag}}%
  {King's College London, United Kingdom}%
  {fei.wu@kcl.ac.uk}{https://orcid.org/0009-0004-5717-0219}{}

\author{Luis Correia}{Flashbots, United Kingdom}{luis@flashbots.net}{https://orcid.org/0009-0001-6772-7437}{}

\author{Sen Yang}{Yale University, United States \and IC3, United States}{sen.yang@yale.edu}{https://orcid.org/0000-0002-8866-2097}{}

\author{Bruno Mazorra}{Flashbots, Spain}{bruno@flashbots.net}{https://orcid.org/0000-0003-0779-0765}{}

\author{Stefanos Leonardos}{King's College London, United Kingdom}{stefanos.leonardos@kcl.ac.uk}{https://orcid.org/0000-0002-1498-1490}{}

\authorrunning{Burak {\"O}z et al.} %TODO mandatory. First: Use abbreviated first/middle names. Second (only in severe cases): Use first author plus 'et al.'

\Copyright{Burak {\"O}z, Fei Wu, Luis Correia, Sen Yang, Bruno Mazorra, and Stefanos Leonardos} %TODO mandatory, please use full first names. LIPIcs license is "CC-BY";  http://creativecommons.org/licenses/by/3.0/

\ccsdesc[500]{Theory of computation~Algorithmic game theory}
\ccsdesc[500]{Computing methodologies~Modeling and simulation}

\keywords{decentralization, block building, potential game, price of anarchy} %TODO mandatory; please add comma-separated list of keywords

\category{} %optional, e.g. invited paper

\supplement{}

\supplementdetails[subcategory={Source Code}, cite={}, swhid={}]{Software}{https://github.com/boz1/db-sims}

\nolinenumbers %uncomment to disable line numbering

\EventEditors{Aggelos Kiayias and Maria Kyropoulou}
\EventNoEds{2}
\EventLongTitle{8th Conference on Advances in Financial Technologies (AFT 2026)}
\EventShortTitle{AFT 2026}
\EventAcronym{AFT}
\EventYear{2026}
\EventDate{October 6--9, 2026}
\EventLocation{London, UK}
\EventLogo{}
\SeriesVolume{395}
\ArticleNo{27}
\newcommand{\added}[1]{\textcolor{blue}{#1}}
\newcommand{\deleted}[1]{\textcolor{red}{\sout{#1}}}
\newcommand{\revised}[2]{\textcolor{red}{\sout{#1}}\textcolor{blue}{#2}}
\renewcommand{\added}[1]{#1}
\renewcommand{\deleted}[1]{}
\renewcommand{\revised}[2]{#2}

\newif\iffullversion

\fullversiontrue % full arXiv version

\iffullversion
    \newcommand{\fullref}[1]{\Cref{#1}}
\else
    \newcommand{\fullref}[1]{the appendix of the full version~\cite{fullver}}
\fi

\begin{document}

\maketitle

\let\oldthefootnote\thefootnote
\let\thefootnote\relax
\footnotetext{\textsuperscript{*}These authors contributed equally to this work.}
\footnotetext{\textsuperscript{\dag}Fei Wu performed work in part during an internship at Flashbots in Spring 2026.}
\let\thefootnote\oldthefootnote

%TODO mandatory: add short abstract of the document
\begin{abstract}
Decentralized block building mechanisms replace the monopoly of a single proposer with multiple builders. However, their censorship-resistance and fair-access benefits depend not only on the number of builders, but also on where builders are geographically positioned to provide transaction
coverage. We study this tension between builder location choice, user transaction coverage, and utility concentration by modeling decentralized block building as a stochastic coverage game. Builders choose regions, information sources emit transactions over a block construction round, and latency determines whether a transaction is received before the deadline.

We show that the game is an exact potential game and admits a pure Nash equilibrium. We prove an asymptotically tight factor-$2$ bound on the Price of Anarchy, which we interpret as the \emph{price of decentralization} from uncoordinated builder placement. We also study builder utility concentration, showing that the lowest-utility builder earns at least half of the highest-utility builder's payoff, and the utility-share HHI is at most \(12.5\%\) above the egalitarian benchmark. We complement the theory with
simulations under richer latency and source environments and show that slot times, builder participation, and reward-sharing rules are important protocol design choices that can shape the price of decentralization.
\end{abstract}

\section{Introduction}
Many blockchain protocols assign the block proposal right to a single leader in each slot.
%, such as the validator selected to propose the next block in Ethereum.
This temporary monopoly gives the proposer control over transaction inclusion and ordering, creating opportunities for \gls{mev}~\cite{daian2019flashboys20frontrunning}.
% This proposer has the unilateral right to decide which transactions to include or exclude and how to order them, creating opportunities for economic value extraction through strategic manipulation, referred to as \gls{mev}~\cite{daian2019flashboys20frontrunning}. Although temporary, this monopolistic power can threaten fair access for users, especially for time-sensitive transactions. For instance, on-chain auctions require timely and credible bid inclusion: a single proposer can unilaterally censor, delay, or selectively include bids, thereby affecting the auction outcome~\cite {fox_et_al:LIPIcs.AFT.2023.19}.
%applications such as on-chain auctions cannot be implemented without additional privacy techniques because they are exposed to strategic censorship by the proposer~\cite{fox_et_al:LIPIcs.AFT.2023.19}. 
%Even if the proposer does not exercise this power for economic gain, 
Moreover, proposers located in particular jurisdictions may face censorship pressures, such as those induced by OFAC sanctions in the U.S.~\cite{wahrstatter_blockchain_2023}, and geographic concentration can create potential liveness risks in the presence of regional outages. Recent multiple-proposer or decentralized block-building proposals seek to reduce this monopoly and improve censorship resistance by allowing multiple builders to construct candidate blocks or sub-blocks~\cite{pranav2025mcp,kniep_solana_constellation_2026}.\footnote{In the paper, we use \emph{builder} as the generic term for an entity that constructs and proposes a candidate (sub-)block.} If a user transaction can be included by multiple builders, who typically share the fees paid by the transaction, this increases the cost of censorship: excluding a transaction may require bribing multiple entities rather than a single proposer. 

However, multiple builders do not automatically imply geographically broad transaction coverage. In permissionless blockchains, the consensus protocol typically does not mandate where builders should operate. Because latency affects whether transactions reach builders before the block construction deadline, self-interested builders may prefer regions that provide access to the same high-value transaction sources. Recent work shows that Ethereum's block-building protocol design is not geographically neutral and can create co-locating incentives to minimize latency to payoff-critical parties~\cite{yang_geographical_2025}. If builders concentrate in a small number of regions, the benefits of having multiple builders can diminish, as users located farther away may face weaker inclusion guarantees due to latency effects. \added{In auction-like settings, such latency differences can give colocated parties an informational advantage~\cite{moallemi2026latencyadvantagescommonvalueauctions}.}

% if builders concentrate in a small number of regions, the benefits of having multiple builders can diminish as users located farther away may face weaker inclusion guarantees due to latency effects. In auction-like settings, such latency differences can give colocated parties an informational advantage~\cite{moallemi2026latencyadvantagescommonvalueauctions}. Thus, the geographic placement of builders can play a critical role in determining whether such decentralized block building mechanisms enable fair economic participation from geographically distributed users. Recent work~\cite{yang_geographical_2025} shows that Ethereum's block-building protocol design is not geographically neutral and can create co-locating incentives to minimize latency to payoff-critical parties.

% In permissionless blockchains, the consensus protocol typically does not mandate where block-building nodes should be run. Instead, builders are operated by self-interested players who choose regions to maximize their utility. Such incentives may lead to inefficient equilibrium outcomes in which no builder wants to unilaterally deviate, even though the resulting placement is socially suboptimal. In particular, builders may duplicate access to the same valuable transaction sources rather than collectively expanding coverage across regions. Such outcomes can undermine the intended censorship-resistance benefits of decentralized block building and may also lead to welfare losses and concentration of economic activity.

This raises a critical design question: if decentralized block building mechanisms rely on multiple self-interested builders who choose locations to maximize individual utility, will their location choices actually improve access for geographically distributed users and transaction sources? How inefficient and concentrated can the resulting placements be? These questions matter because the censorship-resistance and fair-access benefits of multi-builder designs depend not only on the number of builders, but also on whether their locations are distributed enough to provide timely coverage to users and transaction sources across regions over the existing Internet infrastructure with latency constraints.

In this paper, we model decentralized block building as a stochastic coverage game (cf. \Cref{sec:model}). Builders choose geographic regions, while information sources
%, representing generators of transaction value, 
emit transactions over the course of a block-construction round. A builder's region determines the probability that a transaction emitted by a given source reaches the builder before the block deadline. Since multiple builders may receive and include the same transaction, decentralized placement can create redundant coverage: several builders may cover the same high-value sources, while other sources remain weakly covered. 
% This modeling choice emphasizes protocol-level transaction coverage rather than endogenous user routing or builder pricing; we discuss alternative location-market formulations in \Cref{sec:discussion}.

We study the welfare loss induced by this decentralized builder region game (cf. \Cref{sec:theory}). Relative to a social planner who coordinates builder locations to maximize expected transaction coverage, selfish builders may over-concentrate in regions that provide access to the same valuable information sources. We measure this efficiency loss through the \gls{poa}, which we interpret in this setting as the \emph{price of decentralization}, and derive an asymptotically tight worst-case bound on it using the smoothness framework~\cite{roughgarden_intrinsic_nodate}. The key observation is that when a transaction is already covered by several builders, an additional builder receives only a reduced share of its value under the equal-split sharing rule, while not improving welfare. This allows us to relate unilateral deviations to aggregate welfare. We also prove the existence of a pure Nash equilibrium by constructing an exact potential function, and establish bounds on utility concentration among builders at equilibrium. In our model, welfare captures the total transaction value covered by at least one builder, while utility concentration captures how the economic benefits of block building are distributed.

The theoretical results characterize worst-case properties: whether stable placements exist, how inefficient they can be, and how concentrated builder utilities can become. To complement this analysis, we run simulations in richer latency and source environments. These simulations show when welfare losses arise in concrete instances, how they are driven by under-coverage of peripheral sources, and why geographic concentration and builder utility concentration need not coincide (cf. \Cref{sec:simulation_setup,sec:results}). 
The results highlight how protocol designs can shape welfare in decentralized block building (cf. \Cref{sec:discussion}).
We summarize our contributions as follows.
\begin{itemize}
% [leftmargin=*,topsep=2pt,itemsep=1pt]
    \item We model decentralized block building as a latency-induced stochastic coverage game in which builders choose regions and transaction sources generate value over time. 
    %This formulation connects geographic placement, probabilistic transaction reception, and redundant coverage in a single game-theoretic model.
    We analyze the equilibrium and efficiency properties of the game under the equal-split sharing rule. We show that the game admits a pure Nash equilibrium through an exact potential function and prove a tight factor-$2$ PoA bound. 
    %verify that its stochastic coverage structure preserves the valid utility game conditions needed for the standard factor-$2$ PoA bound, and show that this bound is asymptotically tight.
    \item We study utility concentration among builders in equilibrium. We show that the lowest-utility builder earns at least half of the highest-utility builder's payoff, and that the resulting utility-share HHI is at most \(12.5\%\) above the egalitarian benchmark. Both bounds are tight.
    \item We use simulations~\cite{price-of-dececentralization-sim} to study the game under richer latency and source environments. We find that welfare losses are largest in intermediate regimes where peripheral sources are reachable, but selfish incentives still favor high-value source regions. We also show that geographic concentration and utility concentration need not align: planner allocations may improve coverage by placing builders in lower-payoff peripheral regions, while equilibrium outcomes can be more geographically centralized but more utility-balanced.
    % \item We connect our findings to protocol design, highlighting how timing and builder participation can shape the price of decentralization in block building, and outlining extensions to endogenous transaction-routing models and alternative reward-sharing rules.
\end{itemize}

\subsection{Related work}

\parhead{Geographic (de)centralization} 
Prior measurement studies show that Ethereum nodes are geographically concentrated, with deployments repeatedly skewed toward a small number of countries, particularly in Europe and North America~\cite{kim2018measuring,ccaf2025ethereum,kiffer2025multiple,ethernodes2025countries,heimbach2025deanonymizing}. 
These studies establish geographic concentration as an empirical phenomenon, while recent work studies the incentives and mechanisms that shape geographic decentralization.
Yang et al.~\cite{yang_geographical_2025} show that Ethereum's block building regimes create location-dependent validator rewards, allowing latency advantages and asymmetric access to information sources to reinforce concentration.
Roeschlin et al.~\cite{roeschlin2026incentivizing} study how to enforce geography-aware incentives when nodes may misreport their locations, using RTT-based localization to support truthful reporting.
Our work is complementary to both lines. 
We study equilibrium outcomes in decentralized block building when builders strategically choose geographic regions, and analyze how latency-sensitive placement affects welfare, geographic concentration, and builder utility concentration.

\parhead{Decentralized block building} Decentralized block building aims to reduce centralized control over transaction inclusion and ordering, especially under MEV-driven incentives, and to improve censorship resistance.
Industry discussions have advocated distributed, parallel, and global block building architectures to counter latency moats and censorship risks~\cite{flashbots_decentralized_building_2026,miller2024parallel}.
Recent work proposes \gls{mcp} designs to reduce the serial monopoly of a single proposer over inclusion and ordering~\cite{pranav2025mcp,kniep_solana_constellation_2026,pranav2025tfmmcp}. 
Complementary work studies transaction assignment in such protocols, showing that stronger censorship resistance often requires more transaction duplication and hence lower throughput~\cite{elsheimy2026censorship}. 
Our work complements these works through placement equilibria: we study how builders' strategic geographic choices affect welfare, geographic concentration, and builder utility concentration.

\parhead{Coverage games and equilibrium inefficiency} Vetta introduced valid utility games, where agents' utilities are tied to a monotone submodular social objective and Nash equilibria admit constant welfare guarantees~\cite{vetta2002utilitygames}. Gairing studied covering games, where players select subsets of weighted elements and covered value is allocated through utility-sharing rules~\cite{gairing_covering_2009}. 
\revised{Our equal-split rule corresponds to symmetric equal sharing: each transaction's value is divided equally among the builders that include it, regardless of source-specific cardinalities. Unlike prior work on abstract coverage objectives and sharing-rule design, we study equal sharing in a stochastic block building model with latency-induced transaction visibility, proving equilibrium existence, tight efficiency bounds, and utility-concentration bounds.}{Our equal-split rule corresponds to symmetric equal sharing, but coverage is stochastic and induced by heterogeneous source-to-builder reception probabilities.}

\added{Prior work studies the distributional properties of equilibria through notions
such as utility uniformity~\cite{utilityuniformity} and fairness ratio for costs~\cite{fairnessratio}. Our utility dispersion and HHI results complement this literature by studying reward concentration in a stochastic coverage game.}

\added{Beyond worst-case \gls{poa} guarantees, related work examines inefficiency in particular parameter regimes and among outcomes reached in practice or through specified response dynamics~\cite{asymptoticpoa,empiricalpoa,dynamicpoa}. Our simulations take a similar instance-level perspective by evaluating equilibria selected by asynchronous better-response dynamics across sampled environments and initial conditions.}

\section{Model}\label{sec:model}

\subsection{Sources, regions, and probabilistic reception}
We study a decentralized block-building network in which, in every round of a blockchain protocol, a new block containing transactions is appended. Each round has a duration $\Delta>0$. Denote the set of builders by $\mathcal{B}:=\{1,\dots,K\}$ and the set of geographical regions in the network by $\mathcal{R}:=\{1,\dots,R\}$.

In each round, every builder constructs a sub-block from the transactions it receives before the round deadline. We assume that all received transactions are included and impose no block-capacity constraint. We discuss the implications of block-capacity constraints in~\Cref{sec:discussion}.

Let \(\mathcal{I}\) be a finite set of information sources. Each source \(I\in\mathcal{I}\) is associated with a fixed region \(r(I)\in\mathcal{R}\). For each source $I$, let $\mathcal{J}_I$ denote the random set of transactions emitted by source $I$ during one round. Each transaction $j \in \mathcal{J}_I$ has an emission time $t_j \in [0, \Delta]$ and a positive value $V_j$.

We assume that source $I$ has finite expected emitted value in a round. Specifically, there exist a scalar $\Lambda_I>0$ and a density $\rho_I:[0,\Delta]\to\mathbb{R}_+$ with $\int_0^\Delta \rho_I(t)\,dt = 1$ such that, for every bounded function $h:[0,\Delta]\to\mathbb{R}$,
\[
\mathbb{E}\!\left[\sum_{j \in \mathcal{J}_I} V_j\, h(t_j)\right]
=
\Lambda_I \int_0^\Delta h(t)\rho_I(t)\,dt.
\]
Thus, $\Lambda_I$ is the expected total value emitted by source $I$ in one round, and $\rho_I$ describes how this expected value is distributed over time within this round. This formulation lets us work directly with expected value flow rather than specifying the full arrival process of transactions and values.

For each ordered region pair $(i,k)\in\mathcal{R}\times\mathcal{R}$, let $D_{k\to i}$ denote the propagation latency from a source in region $k$ to a builder located in region $i$, and let $F_{k\to i}$ be its cumulative distribution function. We allow the latency distribution to depend on the ordered region pair, so different source-builder pairs may exhibit different propagation characteristics. This captures persistent geographical differences in network connectivity while allowing for random latency fluctuations.

A strategy for builder $b$ consists of selecting a region $s_b\in\mathcal{R}$, and a strategy profile is written as \(\mathbf{s}:=(s_1,\dots,s_K)\in\mathcal{R}^K\). Region selection is modeled as a one-shot simultaneous-move game: builders choose regions based on the known source environment, but not on the contemporaneous choices of the other builders.

Consider a transaction emitted at time $t\in[0,\Delta]$ by a source located in region $k$. If builder $b$ is located in region $i=s_b$, then the transaction reaches builder $b$ before the block deadline with probability
\[
q_k(i,t)
:=
\Pr(D_{k\to i}\leq \Delta-t)
=
F_{k\to i}(\Delta-t).
\]
Thus, for a source $I$ and a builder $b$, the corresponding reception probability is obtained by setting $k=r(I)$ and $i=s_b$, giving $q_{r(I)}(s_b,t)$. Since reception probabilities depend on builders only through their chosen regions, all builders located in the same region have identical reception probabilities.

Let $\mathcal{T}_b$ denote the random set of transactions received by builder $b$ in a round, and write \(\mathcal{T}_b^I := \mathcal{T}_b \cap \mathcal{J}_I\) for the transactions from source $I$ received by $b$. Conditional on the realized transactions and their emission times, receptions are independent across builders. Thus, two builders located in the same region have the same reception probability for a given transaction, but need not have the same realized reception outcome.

\subsection{Value sharing rule and welfare}
After a round, the sub-blocks of all builders are aggregated into a final block. When multiple builders include the same transaction, its value is shared among them. For each transaction $j$, let
\[
n_j(\mathbf{s}) := \left|\{b\in\mathcal{B}: j\in\mathcal{T}_b\}\right|
\]
denote the number of builders that received and included transaction $j$ under profile $\mathbf{s}$. When the profile is clear from context, we write $n_j$.

We assume an \emph{equal-split sharing rule}: if $n_j\ge 1$, then each builder that included transaction $j$ receives the fraction $1/n_j$ of its value. Builders that did not include $j$ receive zero from it. Equivalently, the reward assigned to builder $b$ from transaction $j$ is
\[
V_j \cdot \frac{\mathbf{1}\{j\in\mathcal{T}_b\}}{n_j},
\]
with the convention that this term is zero when $n_j=0$.

We use equal splitting as a simple symmetric and budget-balanced sharing rule: it depends only on the number of builders that include the transaction, treats all including builders equally, and distributes exactly \(V_j\) when the transaction is covered by at least one builder. Thus, the total reward distributed for transaction \(j\) is \(V_j\mathbf{1}\{n_j\ge 1\}\). This rule is consistent with Shapley-style distribution designs for multi-proposer transaction fee mechanisms, where the reward from a transaction is shared among the proposers that include it~\cite{pranav2025tfmmcp}. We abstract from the bidding layer and treat \(V_j\) as the transaction payment to be shared; alternative reward-sharing rules are discussed in~\Cref{sec:discussion}.

The realized reward of builder $b$ under profile $\mathbf{s}$ is
\[
\Pi_b(\mathbf{s})
=
\sum_{I\in\mathcal{I}}\sum_{j\in\mathcal{J}_I}
V_j \cdot \frac{\mathbf{1}\{j\in\mathcal{T}_b\}}{n_j},
\]
again with the convention that the summand is zero when $n_j=0$.

Since transaction generation, including values and emission times, and reception outcomes are random, $\Pi_b(\mathbf{s})$ is a random variable. The payoff-relevant quantity for builder $b$ is its expected reward
\[
    u_b(\mathbf{s}) := \mathbb{E}[\Pi_b(\mathbf{s})],
\]
where the expectation is taken over transaction generation and reception outcomes. Given the other builders' region choices, each builder chooses $s_b$ to maximize its expected reward.

We define the \emph{social welfare} of profile $\mathbf{s}$ as the expected total value of transactions received by at least one builder:
\[
W(\mathbf{s})
:=
\mathbb{E}\!\left[
\sum_{I\in\mathcal{I}}\sum_{j\in\mathcal{J}_I}
V_j\,\mathbf{1}\{n_j\ge 1\}\right].
\]
Thus, $W(\mathbf{s})$ measures the total user value realized by transactions that are successfully covered under profile $\mathbf{s}$. Following the standard welfare view in prior work~\cite{tfm_post_mev}, we treat transfers between users and builders as redistributing value rather than creating it. Therefore, in our stylized model, welfare equals the total value of included transactions.

The equal-split sharing rule implies a useful accounting identity: aggregate builder utility equals welfare. We state this formally in \Cref{lem:aggregate_utility}.

\subsection{Equilibrium and social optimum}
The preceding definitions induce a finite game in which builders are the players, regions are the actions, and expected rewards \(u_b\) are the payoffs. We refer to this game as the \emph{builder region game}.

A pure strategy profile $\mathbf{s}$ is a Nash equilibrium of the builder region game if no builder can improve its expected reward through a unilateral deviation. That is, for every $b\in\mathcal{B}$ and every $s_b' \in \mathcal{R}$,
\[
u_b(\mathbf{s}) \ge u_b(s_b',\mathbf{s}_{-b}),
\]
where $\mathbf{s}_{-b}$ denotes the strategies of all builders other than $b$.

A social planner chooses the entire profile $\mathbf{s}$ jointly to maximize welfare. We denote a welfare-maximizing profile by
\[
\mathbf{s}^* \in \arg\max_{\mathbf{s}\in\mathcal{R}^K} W(\mathbf{s}),
\]
and call $W(\mathbf{s}^*)$ the optimal welfare.
\section{Equilibrium, Efficiency, and Utility Concentration}\label{sec:theory}
In this section, we analyze the equilibrium, efficiency, and utility-concentration properties of the builder region game. We first show that the equal-split sharing rule induces an exact potential function, which implies the existence of a pure Nash equilibrium. We then bound the welfare loss in equilibrium using the \gls{poa}. Finally, we study concentration in builder utilities through the \gls{hhi}. We provide the proofs in~\fullref{app:proofs}.

\subsection{Pure Nash equilibrium via an exact potential}\label{subsec:potential}
We first show that the builder region game admits a pure Nash equilibrium by constructing an exact potential function. Under equal sharing, welfare itself is not a potential function: if a transaction is already covered by other builders, an additional builder can still obtain positive utility by also receiving it, while welfare does not increase. The key observation is instead that the marginal payoff from joining a transaction already received by $m$ other builders is exactly $V_j/(m+1)$, which corresponds to the increment of a harmonic potential.

\begin{proposition}[Exact potential under equal sharing]\label{prop:potential}
Consider the builder region game under the equal-split sharing rule. For $m\geq 1$, let
\[
H_m := \sum_{\ell=1}^m \frac{1}{\ell},
\qquad
H_0 := 0
\]
denote the harmonic numbers. For a realized transaction $j$, let 
\[n_j(\mathbf{s}) := \left|\{b \in \mathcal{B} : j \in \mathcal{T}_b\}\right|\]
denote the number of builders that receive j under profile $\mathbf{s}$. Define
\[
\Phi(\mathbf{s}) := \mathbb{E}\left[\sum_{I \in \mathcal{I}}
    \sum_{j \in \mathcal{J}_I}V_j \, H_{n_j(\mathbf{s})}\right].
\]
Then $\Phi$ is an exact potential for the expected-utility game. That is, for every builder $b \in \mathcal{B}$, every profile $\mathbf{s}$, and every deviation $s_b' \in \mathcal{R}$,
\[
\Phi(s_b', \mathbf{s}_{-b}) - \Phi(\mathbf{s})
=
u_b(s_b', \mathbf{s}_{-b}) - u_b(\mathbf{s}).
\]
Hence, under equal sharing, the builder region game is a finite exact potential game.
\end{proposition}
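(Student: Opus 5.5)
We will prove the identity pathwise, by coupling the two profiles on a common probability space, and only then take expectations. Fix builder $b$, a profile $\mathbf{s}$ and a deviation $s_b'$, and write $\mathbf{s}'=(s_b',\mathbf{s}_{-b})$. We construct both profiles on one space: the same realized transactions $\{(t_j,V_j)\}$, the same reception indicators for every builder $c\neq b$ (whose regions coincide in $\mathbf{s}$ and $\mathbf{s}'$), and two reception indicators $X_j=\mathbf{1}\{j\in\mathcal{T}_b\}$ under $\mathbf{s}$ and $X_j'$ under $\mathbf{s}'$. Conditional on the transactions, receptions are independent across builders, and their marginals depend only on the chosen regions. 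Hence the joint law of $(X_j,(\mathbf{1}\{j\in\mathcal{T}_c\})_{c\neq b})$ under this coupling matches the law under $\mathbf{s}$, and likewise for $\mathbf{s}'$. This is all that is needed, because $\Phi$, $u_b$ and the other quantities are expectations of functionals of a single profile.

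For each transaction $j$ let $m_j$ be the number of builders $c\neq b$ that receive $j$; $m_j$ is the same under both profiles. Then $n_j(\mathbf{s})=m_j+X_j$ and $n_j(\mathbf{s}')=m_j+X_j'$. The key pointwise identity is
\[
V_j\bigl(H_{m_j+X}-H_{m_j}\bigr)=V_j\,\frac{X}{m_j+X}\qquad\text{for } X\in\{0,1\}.
\]
When $X=1$ both sides equal $V_j/(m_j+1)$. When $X=0$ both sides vanish, using the convention that the reward term is zero when $n_j=0$. The right-hand side is exactly builder $b$'s realized reward from $j$. Summing over $j$ gives, pathwise,
\[
\sum_j V_jH_{n_j(\mathbf{s})}-\Pi_b(\mathbf{s})=\sum_j V_jH_{m_j},
\]
and the right-hand side does not depend on $b$'s strategy. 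The same equality holds for $\mathbf{s}'$ with the same right-hand side. Subtracting the two equalities and taking expectations yields $\Phi(\mathbf{s}')-\Phi(\mathbf{s})=u_b(\mathbf{s}')-u_b(\mathbf{s})$.

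The main technical point is justifying the expectations, since $\mathcal{J}_I$ is a random set. Because $H_n\le n\le K$, every term is bounded by $K\sum_j V_j$, and the expected-value-flow assumption with $h\equiv 1$ gives $\mathbb{E}[\sum_{j\in\mathcal{J}_I}V_j]=\Lambda_I<\infty$. All expectations are therefore finite and linearity applies. The coupling step may not even be necessary: one can condition on the transactions and on the receptions of $c\neq b$, and then average over $X_j$ alone using its marginal $q_{r(I)}(s_b,t_j)$.

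Finally, the action set $\mathcal{R}^K$ is finite, so $\Phi$ attains a maximum. A maximizer of $\Phi$ admits no profitable unilateral deviation, and is therefore a pure Nash equilibrium.
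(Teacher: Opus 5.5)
Your proof is correct, but it takes a slightly different route from the paper's. The paper conditions on the transactions and on the other builders' receptions, averages over builder $b$'s reception alone using the probability $q_j(s_b)$, and checks that two conditional changes coincide. The change in potential is $V_j\bigl(q_j(s_b')-q_j(s_b)\bigr)\bigl(H_{n_j^{-b}+1}-H_{n_j^{-b}}\bigr)$, and the change in reward is $V_j\bigl(q_j(s_b')-q_j(s_b)\bigr)/(n_j^{-b}+1)$.

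You instead prove the realization-level identity $\sum_j V_jH_{n_j}=\sum_j V_jH_{m_j}+\Pi_b$. Hence $\Phi-u_b$ is the expected ``others-only'' potential, which depends on $\mathbf{s}$ only through $\mathbf{s}_{-b}$. This is Rosenthal's congestion-game argument applied pathwise. As you note, the coupling is dispensable, since only the joint law of the transactions and the other builders' receptions matters.

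What your version buys:
\begin{itemize}
\item It never uses the reception probabilities, or the independence of $b$'s reception from the others' receptions. It would therefore survive, for instance, correlated latency shocks among co-located builders.
\item It makes integrability explicit via $H_n\le K$ and $h\equiv 1$, which the paper leaves implicit.
\end{itemize}
The paper's computation, in exchange, produces the explicit marginal payoff $V_jq_j(s_b)/(n_j^{-b}+1)$. This is the formula later used to evaluate utilities in the simulations.

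Your last paragraph proves the existence corollary rather than the proposition's closing sentence. That sentence needs only that there are finitely many builders and regions.
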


% \begin{corollary}[Existence of pure Nash equilibrium]\label{cor:pure_ne}
% The builder region game under the equal-split sharing rule admits a pure Nash equilibrium.
% \end{corollary}

% \begin{proof}
% By \Cref{prop:potential}, the game is a finite exact potential game. Hence, any maximizer of $\Phi$ over $\mathcal{R}^K$ is a pure Nash equilibrium~\cite{monderer_potential_1996}.
% \end{proof}

% \begin{remark}[Asynchronous better-response dynamics]\label{rem:abr_pne}
% The potential-game structure also implies finite-time convergence of asynchronous better-response (ABR) dynamics. Starting from any profile, suppose that in each round a single builder is selected and moves to a region that yields strictly higher expected utility than its current region, if such a region exists. By \Cref{prop:potential}, every strict unilateral payoff improvement strictly increases the potential $\Phi$. Since the profile space $\mathcal{R}^K$ is finite, this process can contain only finitely many strict improvements and must terminate at a profile with no profitable unilateral deviation, i.e., a pure Nash equilibrium.
% \sen{the profile space should be consistent. do we need to explain that it is $\binom{|\mathcal R|+K-1}{K}$ here or in the simulation?} \fei{No. Here $\mathcal{R}^K$ is still the correct pure strategy profile space, which means it applies to any game setting. Only after we initialize the game as a symmetric game in the simulation, the computational shortcut of the profile numbers work.}
% \end{remark}
\begin{corollary}[Pure equilibria and better-response convergence]
\label{cor:pure_ne}
Under the equal-split sharing rule, the builder region game admits a pure Nash equilibrium. Moreover, every asynchronous sequence of strict better responses terminates in finite time at a pure Nash equilibrium.
\end{corollary}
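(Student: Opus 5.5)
The corollary follows from \Cref{prop:potential} by the standard potential-game argument of Monderer and Shapley. The plan has three short steps: check that $\Phi$ is real-valued, take a maximizer of $\Phi$ to get existence, and use strict increase of $\Phi$ along improvement paths to get convergence.

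\textbf{Step 1: $\Phi$ is real-valued.} The game is finite, since there are $K$ builders and each chooses among $R$ regions, so the profile space $\mathcal{R}^K$ has $R^K$ elements. For each profile, $\Phi(\mathbf{s})$ is finite. Indeed, $H_{n_j}\le H_K$ because $n_j\le K$, so
\[
0\le \Phi(\mathbf{s})\le H_K\sum_{I\in\mathcal{I}}\Lambda_I<\infty .
\]
This uses the finite-expected-value assumption with $h\equiv 1$. The same bound, with $1/n_j\le 1$ in place of $H_{n_j}$, shows each $u_b(\mathbf{s})$ is finite. This is the only point needing care: it guarantees that the expectations defining $\Phi$ and $u_b$ are finite, so the differences in \Cref{prop:potential} are well defined.

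\textbf{Step 2: existence of a pure equilibrium.} Since $\Phi$ takes finitely many real values on the finite set $\mathcal{R}^K$, it attains a maximum at some $\mathbf{s}^\Phi$. For any builder $b$ and any deviation $s_b'$, \Cref{prop:potential} gives
\[
u_b(s_b',\mathbf{s}^\Phi_{-b})-u_b(\mathbf{s}^\Phi)=\Phi(s_b',\mathbf{s}^\Phi_{-b})-\Phi(\mathbf{s}^\Phi)\le 0 .
\]
So no unilateral deviation is profitable, and $\mathbf{s}^\Phi$ is a pure Nash equilibrium.

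\textbf{Step 3: convergence of better responses.} Consider any asynchronous sequence $\mathbf{s}^0,\mathbf{s}^1,\dots$ in which each step changes a single builder's region to one giving it strictly higher expected utility. By \Cref{prop:potential}, each step strictly increases $\Phi$ by the same positive amount as the builder's gain. Hence $\Phi(\mathbf{s}^0)<\Phi(\mathbf{s}^1)<\cdots$, so no profile can repeat. Because there are only $R^K$ profiles, the sequence has at most $R^K-1$ steps and must terminate. It terminates exactly when no builder has a strict improvement, and that terminal profile is a pure Nash equilibrium by definition.
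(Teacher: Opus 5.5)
Your proposal is correct and follows the paper's proof essentially verbatim: a maximizer of $\Phi$ over the finite profile space $\mathcal{R}^K$ is a pure Nash equilibrium, and exact potentiality makes $\Phi$ strictly increase along strict better-response paths, so they cannot repeat a profile and must stop at one with no profitable deviation. Your Step~1 finiteness check (via $H_{n_j}\le H_K$ and the value-flow assumption with $h\equiv 1$) and the explicit $R^K-1$ bound on path length are small additions that the paper leaves implicit.
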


\begin{proof}
By \Cref{prop:potential}, the game is a finite exact potential game. Hence, any maximizer of \(\Phi\) over the finite pure profile space \(\mathcal R^K\) is a pure Nash equilibrium~\cite{monderer_potential_1996}.

Now consider an asynchronous sequence of strictly better responses. At each step, a single builder changes its region in a way that strictly increases its expected utility. By exact potentiality, this strictly increases the potential \(\Phi\). Since the pure profile space \(\mathcal R^K\) is finite, such strict improvements cannot continue indefinitely. The process must therefore terminate at a profile with no profitable unilateral deviation, i.e., a pure Nash equilibrium.
\end{proof}

\added{In the deterministic-reception special case, the game admits a direct congestion-game representation, where transaction sources are resources and a source of value $\Lambda_I$ gives payoff $\Lambda_I/m$ to each of its $m$ covering builders, who dilute each others rewards. With probabilistic reception, expected payoff depends on the full vector of heterogeneous probabilities, so this compact source-level representation is less direct. Abstractly, every finite exact potential game can be mapped to a congestion game \cite{monderer_potential_1996}}. 

\subsection{Price of Anarchy in decentralized block building}\label{subsec:poa}
In the builder region game, builders may over-concentrate in regions that provide access to valuable information sources. This can lead multiple builders to include the same transactions, creating redundant coverage, while other sources remain weakly covered. The resulting mismatch between private incentives and aggregate coverage can reduce welfare. We use \gls{poa} to measure the welfare loss caused by selfish region selection relative to the welfare-maximizing profile chosen by a social planner.

Let $\mathcal{E}$ denote the set of pure Nash equilibria of the builder region game, and let $\mathbf{s}^*$ be a welfare-maximizing profile. We define \gls{poa} as
\[
    \mathrm{PoA}
    :=
    \frac{W(\mathbf{s}^*)}
    {\displaystyle\min_{\mathbf{s}\in\mathcal{E}} W(\mathbf{s})}.
    \label{eq:poa}
\]

The bound we derive follows the standard \gls{poa} argument for valid-utility games~\cite{vetta2002utilitygames}, a class that includes coverage and location-game variants, expressed through the smoothness framework~\cite{roughgarden_intrinsic_nodate}. The smoothness framework bounds inefficiency by comparing the total payoff players would obtain by unilaterally deviating from a current profile toward a benchmark profile with the welfare of the two profiles. Our contribution is to verify that the builder region game satisfies the required structure after taking expectations over transaction value flows and latency-induced probabilistic reception. Specifically, we show that welfare admits a source-level coverage representation, that equal-split sharing makes aggregate builder utility equal welfare, and that deviation payoffs can be lower-bounded by exclusive coverage. These properties allow the standard smoothness proof to carry over to our stochastic block-building setting.

A payoff-maximization game is $(\lambda,\mu)$-smooth if, for every pair of profiles $\mathbf{s}$ and $\mathbf{s}^*$,
\[
\sum_{b\in\mathcal{B}} u_b(s_b^*,\mathbf{s}_{-b})
\geq
\lambda W(\mathbf{s}^*)-\mu W(\mathbf{s}).
\]
If a game is $(\lambda,\mu)$-smooth and aggregate utility equals welfare, then every pure Nash equilibrium has welfare at least a $\lambda/(1+\mu)$ fraction of the optimal welfare. Equivalently,
\[
\mathrm{PoA}\leq \frac{1+\mu}{\lambda}.
\]

\begin{definition}[Coverage]\label{def:coverage}
A transaction from source $I$ emitted at time $t$ is \emph{covered} under profile $\mathbf{s}$ if at least one builder receives it before the deadline. The corresponding coverage probability is 
\[
f_I(t;\mathbf{s})
:=
1-\prod_{b\in\mathcal{B}}
\bigl(1-q_{r(I)}(s_b,t)\bigr).
\]
The expected coverage of source $I$ under profile $\mathbf{s}$, averaged with respect to the source's emission-time profile $\rho_I$, is
\[
\bar f_I(\mathbf{s})
:=
\int_0^\Delta \rho_I(t) f_I(t;\mathbf{s})\,dt.
\]
\end{definition}

\begin{lemma}[Coverage representation of welfare]\label{lem:welfare_coverage}
For every profile $\mathbf{s}$, total expected welfare satisfies
\[
W(\mathbf{s})
=
\sum_{I\in\mathcal{I}}
\Lambda_I \bar f_I(\mathbf{s}).
\]
\end{lemma}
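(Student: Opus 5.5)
The plan is to condition on the realized transaction-generation process and then apply the defining identity for $\Lambda_I$ and $\rho_I$. Start from
\[
W(\mathbf{s})=\mathbb{E}\Big[\sum_{I}\sum_{j\in\mathcal{J}_I}V_j\mathbf{1}\{n_j\ge1\}\Big].
\]
Use the tower property, conditioning on $\mathcal{G}$, the sigma-algebra generated by the sets $\mathcal{J}_I$, the values $V_j$, and the emission times $t_j$. Since $\mathcal{I}$ is finite, we may exchange $\mathbb{E}$ with the outer sum over sources. For each source $I$ it then suffices to show
\[
\mathbb{E}\Big[\sum_{j\in\mathcal{J}_I}V_j\mathbf{1}\{n_j\ge1\}\Big]=\Lambda_I\bar f_I(\mathbf{s}).
\]

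The first step is the conditional coverage probability. Conditional on $\mathcal{G}$, receptions are independent across builders. Builder $b$ receives $j$ with probability $q_{r(I)}(s_b,t_j)$. Hence
\[
\Pr(n_j\ge1\mid\mathcal{G})=1-\prod_b\bigl(1-q_{r(I)}(s_b,t_j)\bigr)=f_I(t_j;\mathbf{s}).
\]
This uses only independence of the events $\{j\in\mathcal{T}_b\}$ for fixed $j$ across $b$. The second step pulls the $\mathcal{G}$-measurable weights $V_j$ out of the conditional expectation, giving
\[
\mathbb{E}\Big[\sum_{j\in\mathcal{J}_I}V_j\mathbf{1}\{n_j\ge1\}\,\Big|\,\mathcal{G}\Big]=\sum_{j\in\mathcal{J}_I}V_j f_I(t_j;\mathbf{s}).
\]
The third step applies the value-flow identity with $h(t)=f_I(t;\mathbf{s})$. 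This $h$ is bounded in $[0,1]$ and measurable, since it is a polynomial in CDFs evaluated at $\Delta-t$. The identity gives $\Lambda_I\int_0^\Delta\rho_I f_I\,dt=\Lambda_I\bar f_I(\mathbf{s})$. Summing over $I$ finishes the proof.

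The main obstacle is justifying the exchange of conditional expectation with the random, possibly unbounded sum over $j\in\mathcal{J}_I$. All summands are nonnegative, because $V_j>0$ and the indicators are nonnegative. So conditional monotone convergence, or Tonelli, handles this without integrability concerns. Finiteness of the final quantity follows from $\Lambda_I<\infty$. I would state this nonnegativity argument explicitly and otherwise keep the computation brief.
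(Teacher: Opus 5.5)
Your proposal is correct and follows the paper's proof essentially step for step. You use conditional independence of receptions to obtain the coverage probability $f_I(t_j;\mathbf{s})$ for each realized transaction, replace the indicator by this probability inside the expectation, apply the value-flow identity with $h=f_I(\cdot;\mathbf{s})$, and sum over sources; your Tonelli/monotone-convergence justification for exchanging the conditional expectation with the random sum over $\mathcal{J}_I$ (and the measurability check on $h$) adds rigor the paper leaves implicit.
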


The proof applies the expected value-flow assumption with $h(t)=f_I(t;\mathbf{s})$ and then sums over sources.

\noindent\textbf{Interpretation.}
The lemma shows that welfare depends only on whether each transaction is covered by at least one builder. Additional builders receiving the same transaction do not create additional social value. Thus, welfare losses arise from inefficient region allocation: if too many builders cover the same high-value sources, other sources may remain weakly covered relative to the social optimum.

\begin{lemma}[Aggregate utility identity]\label{lem:aggregate_utility}
Under the equal-split sharing rule, aggregate builder utility equals welfare. That is, for every profile $\mathbf{s}$,
\[
\sum_{b\in\mathcal{B}} u_b(\mathbf{s}) = W(\mathbf{s}).
\]
\end{lemma}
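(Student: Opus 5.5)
The plan is a pathwise argument: prove the identity for every realization of transaction generation and reception outcomes, then take expectations. The realized rewards are
\[
\Pi_b(\mathbf{s})=\sum_{I\in\mathcal{I}}\sum_{j\in\mathcal{J}_I} V_j\,\frac{\mathbf{1}\{j\in\mathcal{T}_b\}}{n_j},
\]
with the convention that a summand is zero when $n_j=0$. Summing over $b\in\mathcal{B}$ and exchanging the finite sum over builders with the sum over transactions reduces the claim to a per-transaction identity. For each realized $j$ we need
\[
\sum_{b\in\mathcal{B}} V_j\,\frac{\mathbf{1}\{j\in\mathcal{T}_b\}}{n_j}
= V_j\,\mathbf{1}\{n_j\ge 1\}.
\]

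This identity follows from the definition $n_j=\sum_b \mathbf{1}\{j\in\mathcal{T}_b\}$, split into two cases. If $n_j\ge 1$, the left side equals $V_j\, n_j/n_j=V_j$. If $n_j=0$, every summand is zero by convention, matching $\mathbf{1}\{n_j\ge1\}=0$. This is exactly the budget-balance property of equal splitting noted after the sharing rule is introduced. Hence, pathwise,
\[
\sum_b \Pi_b(\mathbf{s})=\sum_{I}\sum_{j\in\mathcal{J}_I}V_j\,\mathbf{1}\{n_j\ge 1\}.
\]

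Taking expectations and using linearity gives $\sum_b u_b(\mathbf{s})=\sum_b\mathbb{E}[\Pi_b(\mathbf{s})]=W(\mathbf{s})$, by the definition of $W$. Optionally, this can be linked to \Cref{lem:welfare_coverage} to express the common value as $\sum_I\Lambda_I\bar f_I(\mathbf{s})$.

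The only delicate step is justifying the interchange of expectation with the possibly random, unbounded number of transactions. The rewards are nonnegative, so Tonelli's theorem applies. The sums are also integrable, because each $\Pi_b$ is bounded above by $\sum_{j\in\mathcal{J}_I}V_j$ summed over sources, whose expectation is $\sum_I\Lambda_I<\infty$ by the value-flow assumption with $h\equiv1$. So the expectations are finite, and splitting $\mathbb{E}[\sum_b\Pi_b]$ into $\sum_b\mathbb{E}[\Pi_b]$ is legitimate.
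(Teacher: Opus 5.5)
Your proof is correct and follows essentially the same route as the paper: the pathwise per-transaction budget-balance identity $\sum_{b\in\mathcal{B}} V_j\,\mathbf{1}\{j\in\mathcal{T}_b\}/n_j = V_j\,\mathbf{1}\{n_j\ge 1\}$, summed over transactions and sources, followed by taking expectations. Your extra Tonelli/integrability remark, using the value-flow assumption with $h\equiv 1$, is a harmless rigor step that the paper leaves implicit.
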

This follows because a covered transaction distributes exactly its full value across the builders that include it, while an uncovered transaction distributes zero.

\begin{lemma}[Smoothness inequality]\label{lem:smoothness}
The builder region game is $(1,1)$-smooth. That is, for all profiles $\mathbf{s}$ and $\mathbf{s}^*$,
\[
\sum_{b\in\mathcal{B}}u_b(s_b^*,\mathbf{s}_{-b})
\geq
W(\mathbf{s}^*) - W(\mathbf{s}).
\]
\end{lemma}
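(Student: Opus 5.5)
The plan is to prove the inequality pointwise, transaction by transaction, and then take expectations using the value-flow assumption as in \Cref{lem:welfare_coverage}. Fix profiles $\mathbf{s}$ and $\mathbf{s}^*$. For each builder $b$, the deviation payoff $u_b(s_b^*,\mathbf{s}_{-b})$ is the expected equal-split share of transactions that $b$ receives when located at $s_b^*$, while everyone else stays at $\mathbf{s}_{-b}$.

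\textbf{Exclusive-coverage lower bound.} Builder $b$ earns at least the full value $V_j$ of any transaction it receives that no builder in $\mathbf{s}_{-b}$ receives. Its shares of other transactions are nonnegative and can be dropped. Conditional on $j$ being emitted by source $I$ at time $t$, receptions are independent across builders. Hence
\[
u_b(s_b^*,\mathbf{s}_{-b}) \;\ge\; \sum_{I}\Lambda_I\int_0^\Delta \rho_I(t)\, q_{r(I)}(s_b^*,t)\prod_{b'\neq b}\bigl(1-q_{r(I)}(s_{b'},t)\bigr)\,dt .
\]
Abbreviate $q_b^*:=q_{r(I)}(s_b^*,t)$ and $p_{b'}:=q_{r(I)}(s_{b'},t)$. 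Since $\prod_{b'\neq b}(1-p_{b'})\ge \prod_{b'}(1-p_{b'})=:P$, each builder's term is at least $q_b^*P$. Note that $P=1-f_I(t;\mathbf{s})$.

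\textbf{Summing over builders.} We need the pointwise inequality
\[
P\sum_b q_b^* \;\ge\; f_I(t;\mathbf{s}^*) - f_I(t;\mathbf{s}) = P - \prod_b(1-q_b^*).
\]
This follows from the union bound $1-\prod_b(1-q_b^*)\le \sum_b q_b^*$. Multiplying it by $P\le 1$ gives $P\sum_b q^*_b\ge P\bigl(1-\prod_b(1-q_b^*)\bigr)$. It then suffices that $P(1-Q^*)\ge P-Q^*$ with $Q^*:=\prod_b(1-q_b^*)$. Rearranged, this is $Q^*(1-P)\ge 0$, which is true.

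\textbf{Conclusion and main obstacle.} Integrate the pointwise inequality against $\rho_I$, multiply by $\Lambda_I$, and sum over sources. By \Cref{lem:welfare_coverage} this yields $\sum_b u_b(s_b^*,\mathbf{s}_{-b})\ge W(\mathbf{s}^*)-W(\mathbf{s})$. The main obstacle is getting the stochastic coverage algebra right, namely the comparison between exclusive coverage and the difference of coverage probabilities. If the combined bound turns out awkward, I would instead argue submodularity: the welfare gained by adding $b$ at $s_b^*$ to $\mathbf{s}_{-b}$ is at least the gain from adding it to the larger set $\mathbf{s}\cup\mathbf{s}^*$. I would then telescope those gains, as in Vetta's valid-utility argument, using the marginal-contribution form $q_b^*\prod_{b'\neq b}(1-p_{b'})$ directly.
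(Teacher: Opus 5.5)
Your proposal is correct and follows the paper's proof essentially step for step. Both arguments use the exclusive-coverage lower bound, then drop builder $b$'s own factor to get $q_b^*\bigl(1-f_I(t;\mathbf{s})\bigr)$, apply the union bound on $f_I(t;\mathbf{s}^*)$, and finish with $f_I(t;\mathbf{s}^*)\bigl(1-f_I(t;\mathbf{s})\bigr)\ge f_I(t;\mathbf{s}^*)-f_I(t;\mathbf{s})$ (your $Q^*(1-P)\ge 0$) before integrating via \Cref{lem:welfare_coverage}. One small slip: when you multiply the union bound by $P$, the property you need is $P\ge 0$, not $P\le 1$. The submodularity fallback is unnecessary.
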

The proof lower-bounds each builder's deviation payoff by the value it would cover exclusively under the deviation. Summing these exclusive-coverage terms across builders and applying a union-bound argument shows that they cover the welfare gap \(W(\mathbf{s}^*)-W(\mathbf{s})\).

\begin{theorem}[Price of Anarchy bound]\label{thm:poa}
The \gls{poa} of the builder region game is at most $2$. Equivalently, for every pure Nash equilibrium $\mathbf{s}\in\mathcal{E}$ and every welfare-maximizing profile $\mathbf{s}^*$,
\[
W(\mathbf{s}) \geq \frac{1}{2}W(\mathbf{s}^*).
\]
\end{theorem}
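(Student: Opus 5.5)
The plan is the standard smoothness argument, taking \Cref{lem:smoothness} and \Cref{lem:aggregate_utility} as given. Fix any pure Nash equilibrium $\mathbf{s}\in\mathcal{E}$, which exists by \Cref{cor:pure_ne}, and any welfare-maximizing profile $\mathbf{s}^*$. For each builder $b$, the equilibrium condition applied to the deviation $s_b' = s_b^*$ gives
\[
u_b(\mathbf{s}) \ge u_b(s_b^*,\mathbf{s}_{-b}).
\]
Summing over $b\in\mathcal{B}$ and using \Cref{lem:aggregate_utility} on the left-hand side yields
\[
W(\mathbf{s}) = \sum_{b\in\mathcal{B}} u_b(\mathbf{s}) \ge \sum_{b\in\mathcal{B}} u_b(s_b^*,\mathbf{s}_{-b}).
\]

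Next, I invoke the $(1,1)$-smoothness of \Cref{lem:smoothness}. It bounds the right-hand side below by $W(\mathbf{s}^*)-W(\mathbf{s})$. Combining the two inequalities gives $2W(\mathbf{s})\ge W(\mathbf{s}^*)$, which is the claimed bound. Since this holds for every equilibrium, it also holds for the minimizer over the finite set $\mathcal{E}$, and hence $\mathrm{PoA}\le 2$.

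One degenerate case needs a remark: the ratio defining $\mathrm{PoA}$ requires a nonzero denominator. The ratio is well defined once we note that $W(\mathbf{s})>0$ for every profile. Every $\Lambda_I>0$, and some source must have positive reception probability for some region; otherwise all welfare is zero and the bound holds trivially. If $W(\mathbf{s}^*)=0$, the inequality $W(\mathbf{s})\ge \tfrac12 W(\mathbf{s}^*)$ is immediate, and I would state the result in that inequality form to avoid division issues.

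The step I expect to carry the real content is the smoothness inequality itself, but since \Cref{lem:smoothness} is stated earlier, I take it as given. At the level of this theorem, the remaining work is routine.
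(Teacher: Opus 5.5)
Your proposal is correct and follows the paper's proof essentially verbatim: apply the Nash condition to the deviation $s_b^*$, sum over builders, and combine \Cref{lem:aggregate_utility} with the $(1,1)$-smoothness of \Cref{lem:smoothness} to get $2W(\mathbf{s})\ge W(\mathbf{s}^*)$. One small slip is in your degenerate-case remark: $W(\mathbf{s})>0$ need not hold for every profile (e.g., all builders placed in a region with zero reception probability from every source), but this is harmless, since the theorem itself gives $W(\mathbf{s})\ge \frac{1}{2}W(\mathbf{s}^*)>0$ at every equilibrium whenever $W(\mathbf{s}^*)>0$, which is all the ratio needs.
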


\begin{remark}[Broader equilibrium notions]\label{rem:broader_eq}
The factor-$2$ guarantee follows from the standard smoothness argument for valid utility games~\cite{vetta2002utilitygames}. The block-building-specific content is the verification that a latency-induced stochastic region-selection game preserves this structure in expectation. Although \Cref{thm:poa} is stated for pure Nash equilibria, the same smoothness argument implies the factor-$2$ welfare guarantee for mixed Nash, correlated, and coarse correlated equilibria, with welfare evaluated in expectation~\cite{roughgarden_intrinsic_nodate}.
\end{remark}

We next construct an example to show that the factor-$2$ bound is asymptotically tight. The worst case in this game arises when all builders crowd into a region with a single highly valuable source that is already fully covered. A unilateral move to an uncovered secondary source's region would improve social coverage but reduce builder's utility. In contrast, a social planner would keep only one builder at the dominant source and allocate the remaining builders to uncovered source regions.

\begin{proposition}[Tightness of the factor-$2$ bound]\label{prop:tightness}
For every $K \geq 2$ and every $\varepsilon \in (0,1)$, there exists an instance of the builder region game with $K$ builders, such that some pure Nash equilibrium $\mathbf{s}$ satisfies
\[
\frac{W(\mathbf{s}^*)}{W(\mathbf{s})}
= 
2-\frac{1}{K}-\frac{K-1}{K}\varepsilon.
\]
In particular, both the \gls{poa} and PoS can be arbitrarily close to $2$.
\end{proposition}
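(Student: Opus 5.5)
We will build a deterministic-reception instance, so that every latency is either tiny or infinite and reception probabilities are $0$ or $1$. Take $K$ regions $\{1,\dots,K\}$. Place one dominant source $I_1$ in region $1$ with value $\Lambda_{I_1}=1$. In each region $k\in\{2,\dots,K\}$, place one secondary source $I_k$ with value $\Lambda_{I_k}=\frac{1-\varepsilon}{K}$. Choose the latency distributions so that $F_{k\to i}(0)=1$ when $i=k$ and $F_{k\to i}\equiv 0$ when $i\neq k$. Then $q_{r(I)}(i,t)=\mathbf{1}\{i=r(I)\}$ for all $t$, and a builder in region $i$ receives exactly the transactions of the source located in region $i$. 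By \Cref{lem:welfare_coverage}, welfare is the total value of the regions occupied by at least one builder. By equal splitting, a builder in region $i$ earns $\Lambda_{I_i}/m_i$, where $m_i$ is the number of builders in region $i$.

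\emph{Equilibrium.} Let $\mathbf{s}$ place all $K$ builders in region $1$, so each earns $1/K$. A builder that deviates to region $k\ge 2$ would earn $\frac{1-\varepsilon}{K}<\frac1K$, so $\mathbf{s}$ is a pure Nash equilibrium with $W(\mathbf{s})=1$.

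\emph{Optimum.} Welfare equals the sum of the values of the covered sources. With $K$ builders and $K$ sources, all of positive value, the profile placing exactly one builder in each region covers everything, so
\[
W(\mathbf{s}^*)=1+(K-1)\tfrac{1-\varepsilon}{K}=2-\tfrac1K-\tfrac{K-1}{K}\varepsilon .
\]
The ratio $W(\mathbf{s}^*)/W(\mathbf{s})$ is exactly the claimed expression, and letting $K\to\infty$ and $\varepsilon\to0$ drives it to $2$. This gives the PoA statement, since the PoA is at least this ratio.

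\emph{Price of Stability.} Here we must show that \emph{every} pure Nash equilibrium has welfare $1$, i.e.\ that no equilibrium has a builder outside region $1$. Suppose some builder sits alone or shared in region $k\ge2$ and earns at most $\frac{1-\varepsilon}{K}$. Region $1$ then holds $m_1\le K-1$ builders, so moving there yields $1/(m_1+1)\ge 1/K>\frac{1-\varepsilon}{K}$. This is a profitable deviation, and the case $m_1=0$ is also covered. Hence the all-in-region-$1$ profile is the unique equilibrium, and the PoS equals the same ratio.

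This uniqueness check is the only step needing care. Everything else is direct computation from \Cref{lem:welfare_coverage} and the equal-split payoff formula. The one subtlety is realizing $0/1$ reception probabilities within the model's latency-CDF formalism, for instance by using point-mass latencies $0$ and $2\Delta$.
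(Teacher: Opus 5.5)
Your proposal is correct and is essentially the paper's construction: the paper uses a dominant source of value $K$ and $K-1$ exclusive secondary sources of value $1-\varepsilon$, which is your instance scaled by $K$. Every payoff and welfare comparison, and hence the ratio, is therefore identical; your uniqueness argument for the PoS matches the paper's proof of the Price-of-Stability remark, and your realization of $0/1$ reception via point-mass latencies at $0$ and $2\Delta$ fills in a detail the paper leaves implicit.
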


\added{
\begin{remark}[Price of Stability]
\label{remark:pos}
The same construction also implies that the best and worst equilibria coincide, and the game admits a unique pure Nash equilibrium. Therefore, the factor-$2$ bound is asymptotically tight for the Price of Stability.  
\end{remark}
}

\subsection{Builder utility dispersion and concentration}\label{subsec:concentration}
% \bruno{Probably also true for MNE}

We next study how dispersed builder utilities can be at equilibrium. \added{Our
utility-dispersion measure coincides with the notion of \emph{utility uniformity} studied in~\cite{utilityuniformity}.} We then use this dispersion bound to quantify utility concentration among builders under decentralized block building with the equal-split sharing rule.

\begin{lemma}[Equilibrium utility dispersion]
\label{lem:utility_dispersion}
Under the equal-split sharing rule, let $\mathbf{s}$ be a pure Nash equilibrium. Define
\[u_{\min}(\mathbf{s}) := \min_{b \in \mathcal{B}} u_b(\mathbf{s}),
\qquad
u_{\max}(\mathbf{s}) := \max_{b \in \mathcal{B}} u_b(\mathbf{s}).
\]
Then
\[
u_{\min}(\mathbf{s})
\geq
\frac{1}{2}u_{\max}(\mathbf{s}).
\]
\end{lemma}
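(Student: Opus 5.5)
Let $b_{\min}$ attain $u_{\min}(\mathbf{s})$ and $b_{\max}$ attain $u_{\max}(\mathbf{s})$, and let $i = s_{b_{\max}}$. The plan is to use the Nash condition for $b_{\min}$ against the specific deviation $s'_{b_{\min}} = i$, i.e., copying the region of the richest builder. This gives
\[
u_{\min}(\mathbf{s}) \ge u_{b_{\min}}(i,\mathbf{s}_{-b_{\min}}).
\]
It then suffices to show $u_{b_{\min}}(i,\mathbf{s}_{-b_{\min}}) \ge \tfrac12 u_{b_{\max}}(\mathbf{s})$.

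\textbf{Step 1 (transaction-level comparison).} Condition on the realized transactions and on the reception outcomes of all builders other than $b_{\min}$ and $b_{\max}$. Fix a transaction $j$ from source $I$ emitted at time $t$, and let $m$ be the number of builders other than $b_{\min},b_{\max}$ that receive it. Write $p = q_{r(I)}(i,t)$ and $p_0 = q_{r(I)}(s_{b_{\min}},t)$.

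Under $\mathbf{s}$, the expected share of $b_{\max}$ from $j$ is
\[
A = p\Bigl[p_0\,\tfrac{1}{m+2} + (1-p_0)\,\tfrac{1}{m+1}\Bigr] \le \frac{p}{m+1}.
\]
After the deviation, both $b_{\min}$ and $b_{\max}$ sit in region $i$ and receive $j$ independently with probability $p$. The expected share of $b_{\min}$ is then
\[
B = p\Bigl[p\,\tfrac{1}{m+2} + (1-p)\,\tfrac{1}{m+1}\Bigr] \ge \frac{p}{m+2}.
\]
Since $\frac{m+1}{m+2} \ge \frac12$ for $m \ge 0$, we get $B \ge \tfrac12 A$. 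The outcomes of the other builders are unaffected by $b_{\min}$'s move, so $m$ has the same conditional law before and after.

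\textbf{Step 2 (aggregation).} Take expectations over $m$, over the reception outcomes, and over the transaction process. For the last of these, use the value-flow identity with the bounded function $h(t)$ equal to the conditional expected share, exactly as in \Cref{lem:welfare_coverage}. Summing over sources gives $u_{b_{\min}}(i,\mathbf{s}_{-b_{\min}}) \ge \tfrac12 u_{b_{\max}}(\mathbf{s})$, which combined with the Nash inequality completes the proof.

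The main obstacle is the edge case where $b_{\min}$ and $b_{\max}$ are the same builder. This case is trivial, since then $u_{\min} = u_{\max}$. The other point needing care is that receptions are independent only conditionally on realized emissions. Hence the pointwise inequality must be stated conditionally on $t$ and then integrated, not asserted directly on expected utilities.
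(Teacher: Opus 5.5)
Your proposal is correct and follows essentially the same route as the paper, which also invokes $b_{\min}$'s Nash condition against copying $b_{\max}$'s region, conditions on the receptions of the remaining builders, and compares shares of at most $1/(m+1)$ versus at least $1/(m+2)$ transaction by transaction before summing and taking expectations. Your explicit formulas for $A$ and $B$, the integration step via the value-flow identity, and the $b_{\min}=b_{\max}$ remark are slightly more detailed versions of the same argument.
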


\begin{proposition}[Tightness of the utility dispersion bound]\label{prop:utility_dispersion_tight}
The factor $1/2$ in \Cref{lem:utility_dispersion} is tight: there exists an instance and a pure Nash equilibrium $\mathbf{s}$ such that
\[
u_{\min}(\mathbf{s}) = \frac{1}{2} u_{\max}(\mathbf{s}).
\]
\end{proposition}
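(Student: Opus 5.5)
The plan is to exhibit the simplest possible tight instance: deterministic reception with two builders sharing one dominant source. Take $K=2$ builders and $R=2$ regions. Place two sources $I_1$ and $I_2$ in regions $1$ and $2$, with expected values $\Lambda_{I_1}=2$ and $\Lambda_{I_2}=1$.

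The latency distributions are chosen so that reception is deterministic and purely local. A builder in region $k$ receives every transaction from the source in region $k$, so $q_k(k,t)=1$. It receives no transaction from the source in the other region, so $q_k(i,t)=0$ for $i\neq k$. Concretely, we let $D_{k\to k}=0$ and $D_{k\to i}>\Delta$ almost surely, for any choice of $\rho_I$. With reception deterministic, each builder's expected utility reduces to the congestion form noted after \Cref{cor:pure_ne}: a source of value $\Lambda_I$ pays $\Lambda_I/m$ to each of its $m$ covering builders.

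We then verify that $\mathbf{s}=(1,1)$ is a pure Nash equilibrium. Each builder earns $u_b=2/2=1$. A unilateral move to region $2$ yields $\Lambda_{I_2}/1=1$, which is no improvement, so the profile is an equilibrium. Here the minimum and maximum utilities coincide, so this profile is not the tight one.

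The tight profile is $\mathbf{s}=(1,2)$. Builder $1$ earns $2$ and builder $2$ earns $1$, so $u_{\min}=\tfrac12 u_{\max}$ holds exactly. For equilibrium, builder $1$ moving to region $2$ would earn $1/2<2$, so it does not deviate. Builder $2$ moving to region $1$ would earn $2/2=1$, which equals its current payoff, so the deviation is weakly unprofitable. Since the Nash condition is a weak inequality, $(1,2)$ is a pure Nash equilibrium with ratio exactly $1/2$.

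The only delicate point is that tightness relies on this indifference. If a strict equilibrium were required, we would use values $2-\varepsilon$ and $1$ to approach the ratio. The statement only asks for existence, so the equality case suffices.

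The general mechanism behind this instance mirrors the proof of \Cref{lem:utility_dispersion}. The low-utility builder can always join the high-utility builder's region and obtain at least half of that builder's payoff. The example makes this deviation exactly indifferent.

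For completeness, we could give a $K$-builder version: a source of value $2$ covered by one builder, with $K-1$ sources of value $1$, each covered by its own builder. This case is not needed for the statement.
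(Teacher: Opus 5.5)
Your proof is correct and essentially identical to the paper's. It uses the same two-builder, two-region instance with exclusive deterministic reception and source values $2$ and $1$, and the same equilibrium profile with one builder per region (payoffs $2$ and $1$). As in the paper, the low-utility builder is exactly indifferent about joining the high-value region, earning $2/2=1$, while the high-utility builder would drop to $1/2$ by moving. Your extra remarks are all valid but not needed: realizing the reception probabilities through degenerate latencies, the $(1,1)$ equilibrium, the strict $2-\varepsilon$ variant, and the $K$-builder extension.
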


\begin{definition}[Utility shares and concentration]\label{def:utility_hhi}
For any profile $\mathbf{s}$ with $\sum_{b \in \mathcal{B}} u_b(\mathbf{s}) > 0$, define the utility share of builder $b$ by
\[
x_b(\mathbf{s}) := \frac{u_b(\mathbf{s})}{\sum_{c \in \mathcal{B}} u_c(\mathbf{s})}.
\]
The Herfindahl--Hirschman Index (HHI) of the utility profile is
\[
\mathrm{HHI}(\mathbf{s}) := \sum_{b \in \mathcal{B}} x_b(\mathbf{s})^2.
\]
Its minimum possible value over $K$ builders is attained at equal utilities:
\[
\mathrm{HHI}^{\mathrm{egal}} = \frac{1}{K}.
\]
\end{definition}

\begin{proposition}[Equilibrium bound on utility concentration]\label{prop:utility_hhi}
Assume the equal-split sharing rule, and let $\mathbf{s}$ be a pure Nash equilibrium with $\sum_{b \in \mathcal{B}} u_b(\mathbf{s}) > 0$. Then
\[
\mathrm{HHI}(\mathbf{s}) \le \frac{9}{8K} = \frac{9}{8}\,\mathrm{HHI}^{\mathrm{egal}}.
\]
In addition, the factor $9/8$ is tight as a universal bound.
\end{proposition}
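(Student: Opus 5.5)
The plan is to reduce the HHI bound to a finite-dimensional extremal problem using \Cref{lem:utility_dispersion}. At a pure Nash equilibrium $\mathbf{s}$, set $U := \sum_b u_b(\mathbf{s}) > 0$ and $M := u_{\max}(\mathbf{s})$. By \Cref{lem:utility_dispersion}, every builder satisfies $u_b(\mathbf{s}) \in [M/2, M]$. Hence every share $x_b$ lies in an interval $[a, 2a]$ for some $a>0$, and the shares sum to $1$. The proof therefore splits into two parts. First, maximize $\sum_b x_b^2$ over the set
\[
P_a := \{x \in \mathbb{R}^K : \textstyle\sum_b x_b = 1,\ a \le x_b \le 2a\},
\]
and then optimize over $a$.

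For fixed $a$, the objective is convex and $P_a$ is a polytope, so the maximum is attained at a vertex. At a vertex, all coordinates but at most one sit at a bound $a$ or $2a$. I will first treat the relaxation where the number $p$ of coordinates equal to $2a$ is allowed to be real, writing $\theta = p/K \in [0,1]$. The sum constraint then gives $a = 1/(K(1+\theta))$, and
\[
\mathrm{HHI} = K(1-\theta)a^2 + K\theta\,4a^2 = \frac{1+3\theta}{K(1+\theta)^2}.
\]
Maximizing $g(\theta) = (1+3\theta)/(1+\theta)^2$ on $[0,1]$ gives $g'(\theta)=0$ at $\theta = 1/3$, where $g = 2/(16/9) = 9/8$. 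So the target is $\mathrm{HHI} \le 9/(8K)$.

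The main obstacle is making this rigorous without assuming an integral vertex structure. To avoid the case analysis, I will use a cleaner direct inequality. Since $(x_b - a)(2a - x_b) \ge 0$, we have $x_b^2 \le 3a x_b - 2a^2$. Summing over $b$ gives
\[
\mathrm{HHI} \le 3a - 2Ka^2 .
\]
The right-hand side, viewed as a function of $a$, is maximized at $a = 3/(4K)$ with value $9/(8K)$. This bound holds for any admissible $a$, in particular $a = M/(2U)$, which finishes the upper bound without any vertex argument.

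For tightness as a universal bound, I need equilibria whose HHI approaches $9/(8K)$. By the analysis above, equality requires about $K/4$ builders at utility $M$ and $3K/4$ builders at utility $M/2$.

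I will build this from the tightness instance of \Cref{prop:utility_dispersion_tight}, generalized with deterministic reception ($q\in\{0,1\}$) and isolated regions. Take $K=4m$ builders and two kinds of regions:
\begin{itemize}
\item $m$ regions each hold a source of value $\Lambda$, each occupied by a single builder, so each of these builders earns $\Lambda$;
\item $m$ regions each hold a source of value $3\Lambda/2$, each shared by three builders, so each of these builders earns $\Lambda/2$.
\end{itemize}
This is exactly the extremal profile with $\theta = 1/4$ of builders at the top value, and it gives HHI exactly $9/(8K)$.

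It remains to verify that this profile is a Nash equilibrium; this is the step to check carefully.
\begin{itemize}
\item A builder earning $\Lambda/2$ that moves to a singleton region would earn $\Lambda/2$, since that source is then split two ways. This is not a strict improvement.
\item A builder moving to another triple region would earn $3\Lambda/8$, which is worse.
\item A singleton builder moving anywhere earns at most $\Lambda/2$, which is worse than $\Lambda$.
\item Any empty regions will carry zero value.
\end{itemize}
So this is a pure Nash equilibrium attaining the bound exactly when $4 \mid K$. For other $K$, I would either note that the bound is tight as a universal bound through this family, or approximate $\theta=1/4$ by the nearest integer.
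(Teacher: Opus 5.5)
Your upper bound is correct and matches the paper's argument. The inequality $(x_b-a)(2a-x_b)\ge 0$ is exactly the paper's $y_b^2\le a\,y_b$ with $y_b=x_b-a$. Anchoring at $a=M/(2U)$ instead of $a=\min_b x_b$ makes no difference, since $3a-2Ka^2\le 9/(8K)$ holds for every real $a$.

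The tightness part has a genuine error. Your own calculus puts the maximizer at $\theta=1/3$ (equivalently, minimum share $a=3/(4K)$). So the extremal profile has $K/3$ builders at the top utility and $2K/3$ at half of it, not $K/4$ and $3K/4$. Your construction realizes $\theta=1/4$. With $m$ builders earning $\Lambda$ and $3m$ earning $\Lambda/2$, total utility is $5m\Lambda/2$, and
\[
\mathrm{HHI}(\mathbf{s}) = m\left(\frac{2}{5m}\right)^2 + 3m\left(\frac{1}{5m}\right)^2 = \frac{7}{25m} = \frac{28}{25K} < \frac{9}{8K},
\]
which is just $g(1/4)=28/25$.

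This ratio is the same for every $m$, so the family does not even approach $9/8$. As written, you have shown no tightness at all. Your equilibrium check is fine; only the claimed HHI value is wrong. Rounding $\theta=1/4$ for other $K$ also aims at the wrong target.

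The repair is to use $\theta=1/3$ with $K=3\ell$. The paper takes $K$ isolated regions with deterministic reception and one builder per region, with $\ell$ sources of value $2$ and $2\ell$ of value $1$.
\begin{itemize}
\item A value-$1$ builder is indifferent to joining a value-$2$ source (payoff $1$).
\item A value-$2$ builder gets at most $1$ by moving.
\item The shares are $3/(2K)$ ($\ell$ times) and $3/(4K)$ ($2\ell$ times), giving exactly $9/(8K)$.
\end{itemize}
Your shared-region template also works with $\ell$ singletons and $\ell$ pairs, each on its own value-$\Lambda$ source.

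Finally, equality in your chain forces $a=3/(4K)$ and $x_b\in\{a,2a\}$, hence exactly $K/3$ builders at $2a$. So exact attainment is possible only when $3\mid K$, which is why the paper states tightness for $K=3\ell$.
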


\noindent\textbf{Interpretation.}
This proposition shows that, under decentralized block building with the equal-split sharing rule, utility shares among builders cannot be much more concentrated than the egalitarian benchmark. At any pure Nash equilibrium, the utility-share \gls{hhi} is at most \(12.5\%\) above the value attained when all builders earn equal utility. This follows from the equilibrium dispersion bound: the lowest-utility builder must earn at least half of the highest-utility builder's payoff.
\section{Simulation Setup}\label{sec:simulation_setup}
The theoretical results provide worst-case guarantees that hold uniformly across model instances. Simulations allow us to ask a complementary, instance-level question: how do welfare, transaction coverage, and concentration vary with concrete source-value, latency, and builder-participation configurations? To answer this question, we compare self-interested equilibrium outcomes with centralized planner benchmarks across a set of controlled simulation environments. For reproducibility, we open-source our simulator~\cite{price-of-dececentralization-sim}. 

\parhead{Regions and information sources} We instantiate the region set $\mathcal{R}$ using 24 regions from \gls{gcp}, partitioned into two source-location pools (see \fullref{app:gcp-regions}).
The high-value pool contains regions near major financial or technology centers, such as North America, Western Europe, and East Asia\footnote{This classification follows recent observations that Ethereum latency competition depends on relay geography and price-signal paths across North America, Europe, and East Asia~\cite{dataalways2025geography}.}. The peripheral pool contains regions that are geographically further from these centers.
For each simulation instance, we sample the same number of source regions from the high-value and peripheral pools, without replacement, and place one information source in each selected region.
%first choose the locations of information sources by sampling the same number of regions from the high-value pool and the peripheral pool, without replacement.  We then place one source in each sampled region.These source locations remain fixed throughout the simulation run.
Source locations remain fixed throughout a run. We denote the high-value and peripheral source clusters by $\mathcal I_H$ and $\mathcal I_P$, respectively.
% We refer to sources placed in the high-value region pool as the high-value source cluster, denoted by $\mathcal I_H$, and sources placed in the peripheral region pool as the peripheral source cluster, denoted by $\mathcal I_P$.
We normalize the total expected emitted value to $\sum_{I\in\mathcal I}\Lambda_I=10$ in all reported experiments. Source-value asymmetry is parameterized by the high-to-low value share ratio $\gamma$, defined as the ratio between the total expected value emitted by $\mathcal I_H$ and $\mathcal I_P$.
% We parameterize source-value asymmetry by the high-to-low source value ratio $\rho$, defined as the ratio between the total value emitted by the high-value source cluster and the peripheral source cluster.
Thus, the two clusters receive total expected values 
$10\cdot\frac{\gamma}{1+\gamma}$ and $10\cdot\frac{1}{1+\gamma}$, respectively, divided evenly within each cluster.
% Thus, $\rho=1$ gives a $50$--$50$ split, while $\rho=10$ gives a $10/11$ versus $1/11$ split.

\parhead{Transaction and latency}
To generate realized transactions, the simulator uses a Poisson arrival process with emission times sampled uniformly in $[0, \Delta]$. In the reported experiments, transaction values are deterministic and chosen so that each source has the desired expected emitted value $\Lambda_I$. As in previous literature \cite{yang_geographical_2025,wu2024strategic}, the simulator also supports lognormal transaction values, but we set the value noise to zero to isolate latency and source-value effects. When sweeping \(\Delta\), we hold each $\Lambda_I$ fixed, so slot-duration experiments isolate the deadline effect rather than changing total emitted value per slot.

% For each source $I$, transactions arrive according to a Poisson process with rate $\lambda_I$. Over a slot of duration $\Delta$, the number of emitted transactions is $N_I \sim \mathrm{Poisson}(\lambda_I \Delta)$, with emission times sampled uniformly from $[0,\Delta]$. Transaction values are lognormal, $V_I \sim \mathrm{Lognormal}(\mu_I,\sigma_I)$.
% In the reported experiments, we use deterministic per-transaction values chosen so that each source has the desired expected emitted value $\Lambda_I$. The simulator also supports lognormal transaction values, but we set the noise to zero to isolate the effects of latency and source-value asymmetry.
% \done\fei{in the reported experiments, we use deterministic per-transaction values chosen so that each source has the desired expected emitted value $\Lambda_I$. The simulator also supports lognormal transaction values, but we set the noise to zero to isolate the effects of latency and source-value asymmetry.}

We calibrate the propagation delays using an empirical GCP pairwise latency matrix. For a source region $k$ and builder region $i$, latency $D_{k\to i}$ is modeled as lognormal with mean equal to the calibrated empirical latency and standard deviation 15\% of the mean, with a small lower bound to avoid degeneracy. The corresponding reception probability is $q_k(i,t)=\Pr(D_{k\to i}\le \Delta-t).$

\parhead{Equilibrium profiles} Builders choose regions from the same 24-region set and are initialized uniformly at random. We use asynchronous better-response dynamics (ABR), which is guaranteed to converge to a pure Nash equilibrium (PNE) under the exact-potential structure of the game (cf. \Cref{cor:pure_ne}). Builders update one at a time in round-robin order. When builder $b$ is selected, all other builders' regions are fixed; candidate regions are considered in random order, and $b$ moves to the first region that strictly improves its expected utility. If no improving region exists, it stays.\par

% \stef{Not sure if we want to show some robustness of findings in the appendix, e.g. by considering simultaneous updates as this asynchronous best-response involves some kind of coordination among builders (taking turns) that may not be present in practice - simultaneous methods should also often converge but may just take a bit longer (for policy gradient, we show this even for Markov potential games).} \fei{in most settings, it seems that we end up in a periodic cycle between 2 states with simultaneous update} 

In the simulator, utilities are evaluated numerically, and convergence is declared when a full round-robin sweep completes without any builder changing regions. 
In all reported experiments, every simulation run converged to a profile with no utility-improving unilateral deviation within 250 builder updates. We denote the equilibrium profile by $\mathbf{s}^{\mathrm{PNE}}$.

To compute a builder's utility at a candidate region, we use the equal-split sharing rule directly. If builder $b$ considers moving to region $i \in \mathcal{R}$, then for a transaction from source $I$ emitted at time $t$, builder $b$ receives the transaction with probability $q_{r(I)}(i,t)$. If $X^{-b}_{I,t}$ other builders also receive it, then builder $b$'s share is $\frac{1}{1+X^{-b}_{I,t}}$. Thus,
\[
    u_b(i,\mathbf{s}_{-b})
    =
    \sum_{I\in\mathcal I}
    \Lambda_I
    \int_0^\Delta
    \rho_I(t)\,
    q_{r(I)}(i,t)\,
    \mathbb{E}\!\left[\frac{1}{1+X^{-b}_{I,t}}\right]dt.
\]
Here, $X^{-b}_{I,t}$ is the random number of builders other than $b$ that receive a transaction from source $I$ emitted at time $t$, under the profile $\mathbf{s}_{-b}$. Since receptions are independent across builders, $X^{-b}_{I,t}$ follows a Poisson-binomial distribution with parameters $\{q_{r(I)}(s_{b'},t)\}_{b'\neq b}$. We compute the probability mass function by dynamic programming, evaluate $\mathbb{E}[1/(1+X^{-b}_{I,t})]$, and approximate the time integral using a uniform grid of points.
For each parameter setting, we sample five random source layouts. For each source layout, we simulate three independently sampled initial builder profiles, giving 15 runs per parameter setting. 

\subsection{Benchmarks and metrics}
\parhead{Planner benchmark} 
For each simulation instance, we compare the equilibrium profile with a centralized social-planner benchmark. 
% that jointly chooses the builder-location profile to maximize expected welfare. 
% Following the model definition, let $\mathbf{s}^* \in \arg\max_{\mathbf{s}\in\mathcal R^K} W(\mathbf{s})$ and write $W^*:=W(\mathbf{s}^*)$.
When exact optimization is tractable, the planner benchmark is $\mathbf{s}^* \in \arg\max_{\mathbf{s}\in\mathcal R^K} W(\mathbf{s})$, and $ W^*:=W(\mathbf{s}^*).$
We assume builders are symmetric, so exact search can be performed over occupancy vectors rather than over all labeled profiles in $\mathcal{R}^K$. The number of such occupancy vectors is $\binom{|\mathcal R|+K-1}{K}$. 
% When the planner problem is computationally tractable, we compute this benchmark by exhaustive search over the full combinatorial space of feasible $K$-builder placement multisets over the candidate region set $\mathcal R$.
% This space has size $\binom{|\mathcal R|+K-1}{K}$, which grows quickly with both $|\mathcal R|$ and $K$.
We use exact exhaustive search in the experiments with fixed number of builders $K=5$.
For experiments that sweep over larger builder counts, exact search becomes computationally expensive, so we use a greedy welfare-maximization routine as the planner benchmark. Across all tested instances, the worst observed greedy-to-exact welfare ratio is 0.999129 (see \fullref{app:greedy-validation}). 
% We compare the greedy benchmark with exhaustive search on tractable instances in \Cref{app:greedy-validation}.
In experiments using the greedy benchmark, welfare ratios should be interpreted relative to the greedy planner benchmark rather than the exact optimum.
% In all cases, the same planner routine is used consistently within a given experiment when computing welfare ratios and planner-side placement metrics.
% \done\fei{builders are symmetric, exhaustive search should not be $R^K$ but $(\frac{K+R-1}{K})$}

% Since exhaustive search over $\mathcal R^K$ is computationally expensive for our larger simulation instances, we approximate $W^*$ using a greedy welfare maximization routine.

\parhead{Metrics}
The primary efficiency metric is \emph{welfare ratio}, calculated as $\mathrm{WR}:=W_{\mathrm{PNE}}/\widehat W^*$, where $\widehat W^* = W^*$ for exact planner search and equals the greedy planner welfare when the greedy benchmark is used. 
When $\widehat W^*=W^*$, the theoretical $\mathrm{PoA}\le 2$ bound corresponds to $\mathrm{WR}\ge 1/2$.
% When the greedy planner is used, $\mathrm{WR}$ should instead be interpreted as welfare relative to the greedy planner benchmark.

% \parhead{Utility HHI} 
We report utility concentration with \emph{utility HHI} following \Cref{def:utility_hhi}. This metric measures how concentrated expected builder utilities are at the selected PNE.

Geographical concentration of builder locations is measured with \emph{geographic HHI}. Let $g_i(\mathbf{s}) := |\{b\in \mathcal{B} : s_b=i\}|/K$ be the fraction of builders located in region $i\in\mathcal R$. We define
\[
    \mathrm{HHI}_{\mathrm{geo}}(\mathbf{s})
    =
    \sum_{i\in\mathcal R} g_i(\mathbf{s})^2 .
\]
When $K\le |\mathcal R|$, its minimum value is $1/K$, attained when all builders occupy distinct regions. Its maximum value is 1, attained when all builders co-locate in a single region.
% \boz{after going over the results, I think our geo hhi metric is not very informative. In first three experiments, we have 5 symmetric high value source and 5 builders; it is very natural that we get low hhi. A more informative HHI definition could be based on concentration in high value sources vs periphery vs other regions. Or, having more builders.}\luis{I agree. In the previous paper there were 1k validators for 40 regions so geo hhi was informative, but thats not the case here. I think if we want to measure geographical concentration we should add a distance metric like normalised mean pairwise distance. Your idea of seeing whether builders are in high value source regions vs peripheral vs other regions doesn't necessarily imply colocation if all builders are for example in the high value cluster. We can have high value cluster instances where 4/5 instances are in europe, or where all sources are dispersed.}

% \parhead{Cluster coverage}
Finally, for a source cluster $G\subseteq\mathcal I$,  we define \emph{value-weighted cluster coverage} by
% To measure source-side transaction coverage, we report coverage separately for the high-value and peripheral source clusters.
% Using the source-level expected coverage $\bar f_I(\mathbf{s})$ from \Cref{def:coverage}, the cluster coverage of a source cluster $G\subseteq\mathcal I$ is
\[
    \mathrm{Coverage}(G;\mathbf{s})
    =
    \frac{
        \sum_{I\in G} \Lambda_I \bar f_I(\mathbf{s})
    }{
        \sum_{I\in G} \Lambda_I
    }.
\]
We report this metric for $G=\mathcal I_H$ and $G=\mathcal I_P$. Higher coverage means that transactions from that cluster are more likely to reach at least one builder before the deadline.
%while lower coverage indicates that transactions from that cluster are less likely to reach any builder before the deadline.

% \ignore{
% \noindent\textbf{Builder distances.} 
% Finally, we report the mean pairwise geographic distance among builders.
% Let $d_{\mathrm{geo}}:\mathcal R\times\mathcal R\to\mathbb R_{\ge 0}$ denote the great-circle distance between two regions, computed from the latitude and longitude of the GCP regions. For a final builder profile $\mathbf{s}$, we define
% \[
%     \mathrm{Dist}_{\mathrm{pair}}(\mathbf{s})
%     =
%     \frac{2}{K(K-1)}
%     \sum_{1\le b < b' \le K}
%     d_{\mathrm{geo}}(\mathbf{s}_b,\mathbf{s}_{b'}).
% \]
% This metric measures the average physical distance between builders and complements geographic HHI: larger values indicate that builders are more geographically spread out, while a value of zero means that all builders are located in the same region.
% \sen{perhaps remove this one}\boz{uyes}
% }
\section{Simulation Results}
\label{sec:results}
% How do protocol design levers affect equilibrium welfare and concentration? What's the impact of value asymmetry across regions?
We now study how the equilibria behave across concrete source-value and latency environments. The goal is to understand how self-interested builder location choice affects welfare, transaction coverage, geographic centralization, and builder utility concentration in equilibrium, and how these outcomes compare with those in the welfare-optimal planner benchmark. 
The experiments vary the two main forces in the model.
Source-value asymmetry changes builders' private incentives by making some sources more attractive than others. The slot duration controls how strongly latency affects whether transactions arrive before the deadline.

\begin{description}
    \item[Experiment 1: source-value asymmetry.] We isolate the effect of source-value asymmetry by sweeping the high-to-low source value ratio at a fixed slot duration and builder count.
    \item[Experiment 2: latency deadline effect.]  We isolate the latency deadline effect by sweeping the slot duration while holding value asymmetry fixed.
    \item[Experiment 3: interaction effect.] We study the interaction between source-value asymmetry and latency sensitivity by jointly varying the value ratio and the slot duration.
    \item[Experiment 4: builder participation.] We vary the number of builders to test whether additional builders expand coverage or mainly duplicate coverage around attractive sources.
\end{description}

The main figures in these experiments report aggregate metrics across randomized source layouts and initializations. These summaries are useful for comparing welfare, coverage, and concentration, but they do not show the underlying builder placements in individual simulation runs. To make the placement mechanism more concrete, we visualize a representative builder placement in equilibrium and the planner benchmark for each experiment in \fullref{app:placement-view}.

\subsection{Experiment 1: source-value asymmetry effect}
% Question/setup.
% Results.
% Interpretation.
This experiment isolates the effect of source-value asymmetry. We test whether increasing the value gap between high-value and peripheral sources causes builders to concentrate near high-value sources, creating redundant coverage and reducing peripheral-source coverage.

\parhead{Setup} We fix $K=5$ builders and $\Delta=50$ms slot duration.\footnote{We use $K=5$ as a baseline, motivated by the fact that the current Ethereum builder market is concentrated among a few builders~\cite{oz2024wins,yang2025decentralization,relayscan}. $50$ms slot duration is motivated by Solana Constellation's configuration, in which proposers submit transactions and the leader assembles a batch in each cycle~\cite{kniep_solana_constellation_2026}.}
Each simulation instance contains five high-value sources and five peripheral sources, sampled from their respective region pools.
We vary the high-to-low source value ratio $\gamma$ from $1$ to $20$, while keeping the total expected emitted value fixed. Thus, increasing the ratio reallocates value toward the high-value sources without changing the total value generated.
% All sampling, initialization, ABR termination, and metric-aggregation procedures follow the standard simulation workflow.

% \sen{how about the following figures? I only keep welfare ratio, two HHIs and the cluster coverage}\luis{I like it}

\begin{figure}
    \centering
    \includegraphics[width=\linewidth]{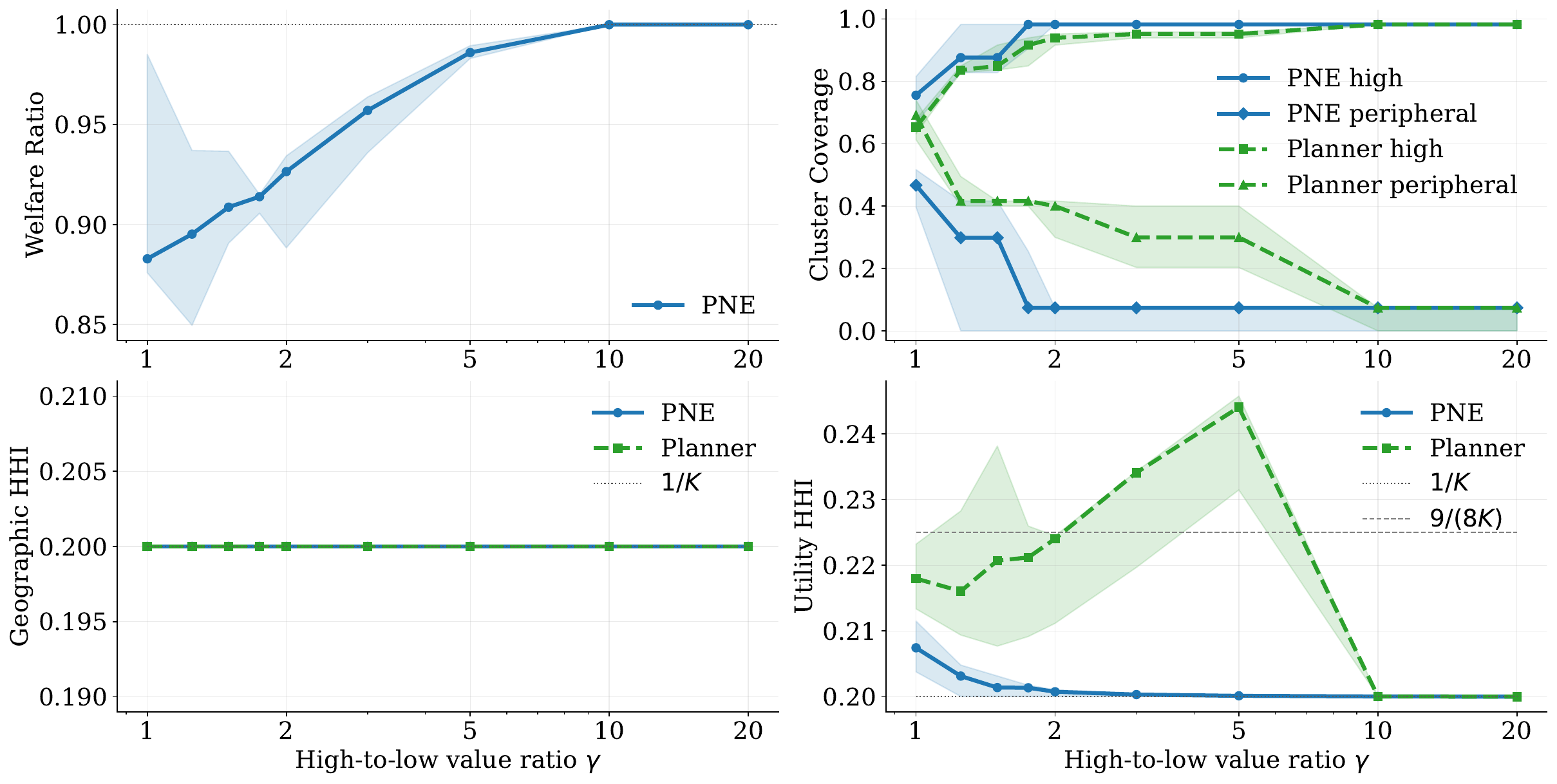}
    \caption{\textit{Effect of source-value asymmetry.} Lines show medians across runs; shaded bands show inter-quartile ranges. As the high-to-low source value ratio increases, equilibrium welfare approaches the optimal welfare while geographic \gls{hhi} remains exactly $1/K$. The main effect is coverage reallocation: at intermediate value ratios, coverage already concentrates on high-value sources in equilibrium, whereas the planner maintains peripheral source coverage and induces higher utility concentration.}
    \label{fig:exp1}
\end{figure}

\parhead{Results} In the top-left panel of \Cref{fig:exp1}, the welfare in equilibrium approaches that in the planner benchmark as the high-to-low value ratio increases. The welfare ratio rises toward one, indicating that builder placement in equilibrium recovers most of the optimum welfare achieved by the planner when high-value sources dominate the total value.

This improvement is not driven by region-level co-location. The bottom-left panel shows that geographic \gls{hhi} remains essentially at the uniform baseline $1/K$ throughout the value-ratio sweep. In this experiment, there are exact $K=5$ builders and five high-value sources, 
so builders can cover the high-value cluster in equilibrium by spreading across those regions rather than co-locating.
%and five peripheral information sources，
%Since the number of high-value sources matches the number of builders, builders can respond to larger high-value rewards by spreading across distinct high-value sources rather than co-locating in the same region.
The main effect of source-value asymmetry is therefore the reallocation of coverage. In the top-right panel, as the high-to-low value ratio increases, builders become increasingly incentivized to cover high-value sources in equilibrium: high-value cluster coverage quickly approaches one, while peripheral cluster coverage drops sharply. In the planner benchmark, the planner keeps peripheral coverage higher at intermediate value ratios, but this comes with a higher builder utility HHI (bottom-right panel), indicating that welfare-improving coverage can require allocating builders to lower-payoff regions.

Overall, in this balanced-capacity setting, source-value asymmetry reallocates coverage toward high-value sources without inducing region-level co-location: with as many high-value source regions as builders, builders can spread across distinct high-value regions. This should not be interpreted as saying that value asymmetry cannot create geographic concentration. If high-value sources are themselves geographically clustered, the same incentive to cover high-value sources may instead lead builders to concentrate in a small number of regions.

% \begin{figure}
% \centering    \includegraphics[width=\linewidth]{figures/builder_placement.png}
%     \caption{example fig}
% \end{figure}

\subsection{Experiment 2: slot-duration effect}
% Same structure.
Next, we fix the source-value distribution and vary the slot duration, isolating the other main force in the model: deadline-induced latency sensitivity. 
This experiment tests how the block-construction round duration changes latency-sensitive builder placement. When the duration is short, propagation latency has a stronger effect on whether transactions reach builders before the deadline. When the duration is long, more source-builder pairs can receive transactions in time, and geographic placement should matter less for coverage.
% \sen{do we want to clarify the transaction volume here does not affect the outcome?}

\parhead{Setup}
We fix $K=5$ builders and high-to-low value ratio $\gamma = 10$.
% Since the high-value and peripheral clusters contain the same number of sources, this corresponds to a high-value share of approximately $\alpha=\frac{10}{10+1}\approx 0.909$.
Each instance contains five high-value sources and five peripheral sources.
We vary slot durations from $10$ms to $12$s.  The lower end is tighter than the proposed $50$ms Solana Constellation cycle \cite{kniep_solana_constellation_2026}, while the upper end matches the $12$s slot duration used by Ethereum Proof-of-Stake~\cite{buterin2020combining}. When sweeping slot durations $\Delta$, we hold each source's expected emitted value $\Lambda_I$ fixed, so the experiment isolates the effect of the deadline on transaction reception rather than changing the amount of value generated per round.
% All sampling, initialization, ABR termination, and metric-aggregation procedures follow the standard simulation workflow.

\parhead{Results} \Cref{fig:exp2} shows a non-monotone effect of slot duration. In the top-left panel, when slots are very short, the welfare ratio is close to one, but this does not indicate broad coverage. Rather, the deadline is so tight that builders can effectively capture value only from nearby sources. Since high-value sources dominate the value distribution in this experiment, both the planner and the equilibrium incentive place builders near high-value source regions, leaving peripheral sources mostly uncovered.

% the duration is so tight that peripheral sources are largely unreachable for both the ABR-selected PNE and the planner benchmark.
% In this regime, the planner has little additional geographic flexibility to use.

\begin{figure}
    \centering
    \includegraphics[width=\linewidth]{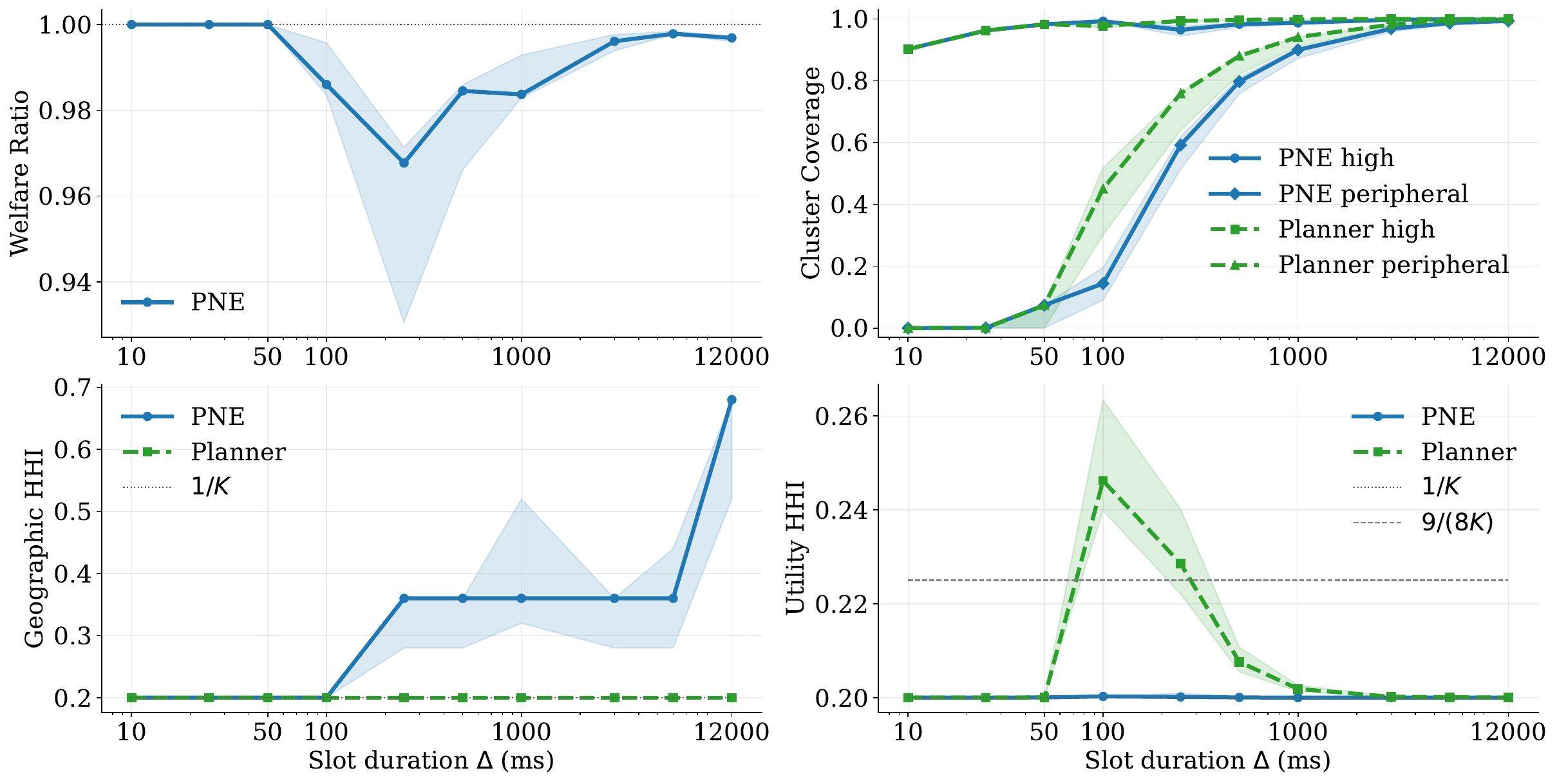}
    \caption{\textit{Effect of slot duration.} Larger slot durations relax the latency deadline and increase source-cluster coverage. The welfare gap is largest at intermediate durations, where peripheral coverage is feasible in the planner benchmark but not reliably supplied in equilibrium. At long durations, coverage is nearly saturated, so geographic concentration has limited welfare cost even when the builder placement in equilibrium remains more concentrated than in the planner benchmark.}
    \label{fig:exp2}
\end{figure}

% As the slot duration enters the intermediate range, peripheral transactions become reachable only from some builder locations. The planner can therefore increase welfare by allocating the fixed set of five builders across regions that are closer to different source clusters, rather than placing multiple builders near the same high-value sources. This is visible in geographic\gls{hhi} panel: ABR geographic concentration rises relative to the planner precisely in the range where the welfare ratio falls.

% The cluster coverage panel shows the mechanism behind this welfare loss.
% High-value cluster coverage is already near saturation, while peripheral cluster coverage under ABR lags behind the planner.
% Thus, the welfare loss comes from redundant coverage around already attractive sources rather than from a failure to cover the high-value cluster.

At intermediate slot durations, peripheral sources become reachable for some regions, but coverage still depends strongly on builder placement. This is where the welfare ratio falls: the planner spreads builders across regions to improve coverage of different source clusters, while in equilibrium, builders are concentrated around regions with better access to high-value sources. The top-right cluster-coverage panel shows the resulting mechanism: high-value coverage is already near saturation, whereas peripheral coverage in equilibrium lags behind that in the planner benchmark. Thus, the welfare gap comes primarily from under-coverage of peripheral sources in equilibrium, induced by redundant coverage around already attractive high-value sources. The utility HHI panel indicates that this welfare gap is not primarily an effect of payoff inequality: builder utilities remain relatively balanced even when peripheral coverage is under-provided.

This transition is best interpreted relative to the latency scale of the underlying region graph (see \fullref{app:gcp-regions}). The largest welfare losses arise in the intermediate regime, where $\Delta$ is comparable to cross-region latencies: peripheral coverage is feasible, but only for appropriately placed builders. When $\Delta$ is much smaller than typical cross-region propagation delays, builders can capture value only from nearby sources, so builders have limited ability to cover peripheral sources in both the equilibrium and the planner benchmark. When $\Delta$ is much larger than these latencies, most source-builder pairs can communicate before the deadline, so transaction coverage becomes less sensitive to builder geography. In this long-slot regime, geographic centralization may persist, but it no longer produces a large welfare loss in this model because coverage is close to saturated.

% The welfare ratio recovers when slot duration becomes longer.
% At these deadlines, most source-builder pairs have enough time for transactions to arrive before the deadline. As a result, even if the equilibrium profile remains geographically centralized, it still covers nearly all the emitted value. In other words, longer slots make transaction coverage less sensitive to builder geography: centralization may persist, but it no longer produces a large welfare loss. The bottom-left utility \gls{hhi} panel tells a similar story. Builder payoffs remain balanced in equilibrium, so the welfare gap in the intermediate regime is not mainly about payoff inequality. It is about source coverage: peripheral sources are under-covered in equilibrium than in the planner benchmark when the latency constraints are neither extremely tight nor fully relaxed. 
We repeat this experiment with a lower value ratio of $\gamma = 2$, where peripheral sources contribute to a larger share of the total value. The qualitative patterns are similar, but the welfare gap is larger at intermediate slot durations, where peripheral sources are valuable enough to matter and reachable enough for the planner to cover. We provide the figure and discussion in \fullref{app:exp2_low_gamma}.

% Thus, the relevant design parameter is not slot duration in isolation, but slot duration relative to source-to-builder propagation latencies. Protocols with very short rounds can make transaction coverage highly local, while protocols with long rounds reduce the welfare cost of geographic concentration. The fragile regime is the intermediate one, where cross-region coverage is possible but still sensitive to builder placement.

% propagation delay is no longer the binding constraint for most source-builder pairs, so builders can receive nearly all emitted value before the deadline despite being geographically centralized in equilibrium.\boz{this is a critical discussion item; longer durations make participation/coverage less-geography dependent. eg solana vs ethereum}\fei{should be cautious under the equal split sharing rule}\boz{ofc under this modeling with this rule, all these results apply}. \stef{I agree this is an important point as slot duration is a central consideration currently (and for some time) for all blockchains. As Figure 2 shows, however, this may lead to geographic concentration introducing the usual risk associated with that.} Geographic \gls{hhi} may remain high, but its welfare cost becomes small because coverage is close to saturated. The utility \gls{hhi} panel reinforces this interpretation: payoff inequality among builders remains relatively low, so the main inefficiency is not unequal builder utilities, but the spatial allocation of coverage when latency constraints are partially, but not fully, relaxed.

\subsection{Experiment 3: value-asymmetry--slot-duration interaction effect}
Having isolated source-value asymmetry and slot duration separately, we now study their interaction. The goal is to identify which combinations produce inefficient welfare outcomes, weak transaction coverage, or concentrated builder placement in equilibrium.

\parhead{Setup}
Each simulation instance contains five builders, five high-value sources, and five peripheral sources, sampled from their respective region pools.
We sweep the high-to-low source value ratio $\gamma$ from $1$ to $20$ and the slot duration from $10$ms to $12$s.
% For each parameter pair, we follow the standard simulation workflow with five source layouts and three random initial builder profiles per layout.

\parhead{Results}
\Cref{fig:exp3} shows that source-value asymmetry and slot duration jointly determine how close the equilibrium outcome is to the welfare-optimal outcome, and reinforces our prior results. The welfare ratio is primarily non-monotone in the slot duration. At both extremes, welfare in equilibrium is close to that in the planner benchmark. The largest welfare loss occurs at intermediate slot durations, where peripheral sources are reachable from certain regions but coverage remains sensitive to builder placement. The value ratio affects how costly this peripheral under-coverage is. When the high-to-low value ratio is small, and the peripheral sources meaningfully contribute to the total emitted value, the welfare loss becomes larger in equilibrium. As the value ratio increases, high-value sources dominate the total emitted value, so the welfare penalty from leaving peripheral sources under-covered becomes smaller.

The coverage-gap surfaces identify the mechanism. A positive gap means the corresponding source cluster is better covered in the planner benchmark than in equilibrium. When the value ratio $\gamma=1$, the two source clusters have equal expected value, so the labels ``high-value'' and ``peripheral'' distinguish geography rather than value. In this case, planner advantage in one cluster and equilibrium advantage in the other need not create a large welfare gap.

As the value ratio increases, the high-value cluster receives a larger share of total value, and builders in equilibrium have stronger incentives to prioritize regions with access to high-value sources. At intermediate slot durations, peripheral sources are still reachable from some regions, so the planner can recover additional welfare by allocating builders to improve peripheral coverage. By contrast, peripheral coverage is under-provided in equilibrium, and builder placement remains tilted toward high-value sources, contributing to a negative coverage gap in high-value source coverage, a positive gap in peripheral source coverage, and a corresponding welfare penalty. 

The \gls{hhi} surfaces are consistent with this mechanism. Utility concentration becomes higher in the planner benchmark because welfare-improving allocations yield lower individual payoffs for builders placed away from high-value sources. Conversely, the utility \gls{hhi} in equilibrium remains balanced and close to the egalitarian baseline in much of the grid, but it comes with a more geographically centralized builder placement around the high-value sources.

% Instead, the largest welfare losses appear in an intermediate region: slot durations are long enough that peripheral sources can be reached from some builder locations, but short enough that geographic placement still determines which sources are covered. At the same time, peripheral sources remain valuable enough that losing their coverage has a meaningful welfare cost.

\begin{figure}[t]
\centering    \includegraphics[width=\linewidth]{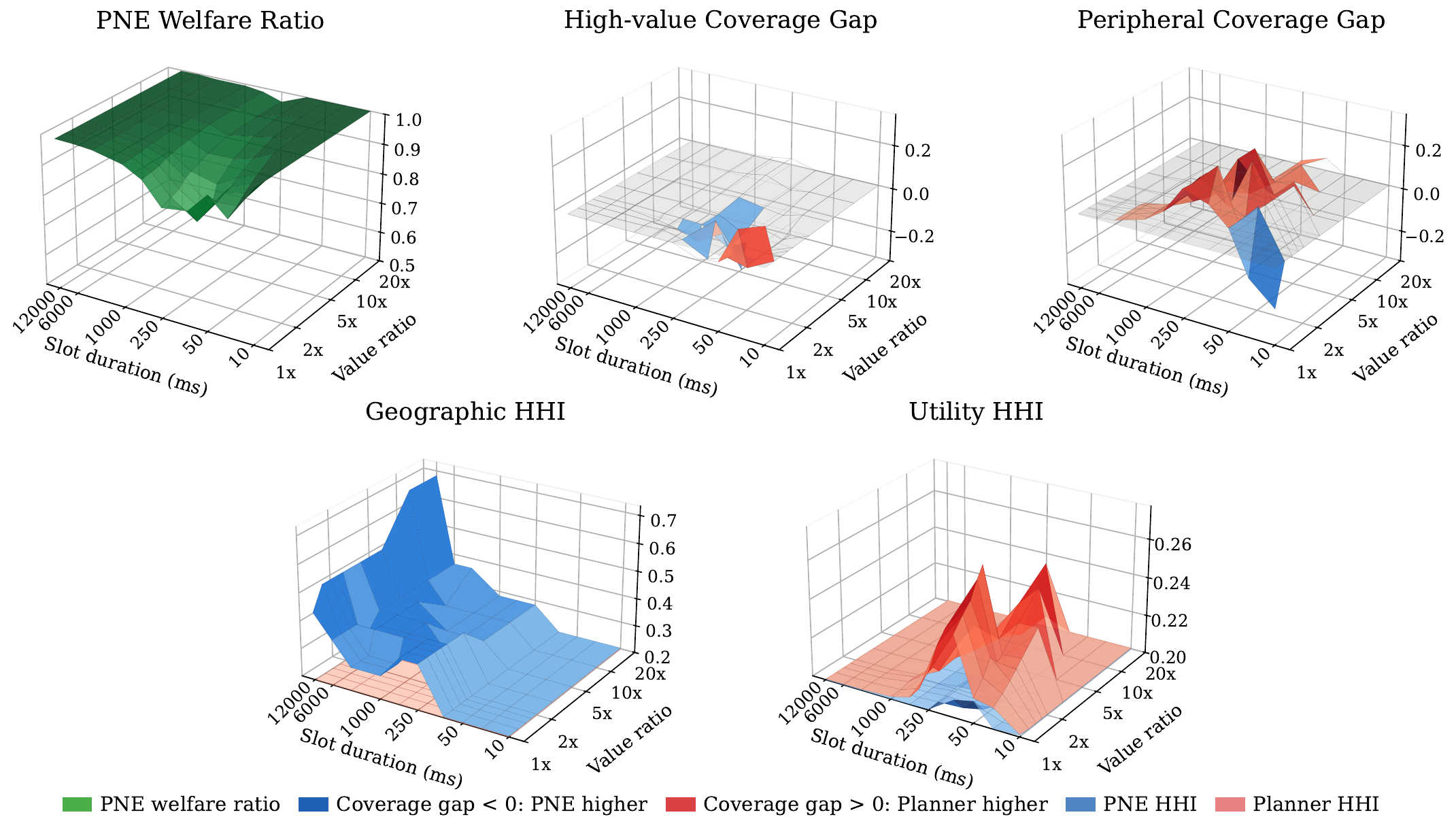}
\caption{\emph{Joint effect of source-value asymmetry and slot duration.}
Surfaces show medians across source layouts and initializations for fixed
\(K=5\). Darker colors indicate higher absolute values. Coverage gaps are defined as planner coverage minus equilibrium coverage, so positive gaps indicate that the planner covers more of the corresponding source cluster. The HHI panels compare geographic and utility concentration in equilibrium and in the planner benchmark. Welfare losses are largest at intermediate slot durations and lower-to-moderate value ratios.}
    \label{fig:exp3}
\end{figure}

\subsection{Experiment 4: builder participation effect}

Previous experiments isolate the effects of source-value asymmetry and slot duration, as well as their interaction.
We now vary the number of builders to test whether larger builder sets expand source coverage or mainly duplicate coverage around already attractive sources. Adding builders increases the system's potential coverage capacity, but when source values are asymmetric, additional builders may still prefer regions with strong access to high-value sources.

\parhead{Setup} We vary the number of builders from $K=3$ to $K=12$. We fix the high-to-low value ratio at $\gamma=10$ and the slot duration at $\Delta=50$ms. 
For this builder-count sweep, we use the greedy welfare-maximization routine as the planner benchmark, since the cost of exact search grows rapidly with the builder count $K$.
% Each simulation instance contains six high-value sources and six peripheral sources, sampled from their respective region pools.
% \sen{do we really need 6 sources?}
% \luis{We can keep it consistent with the other experiments and have 5 in each cluster. I ran with both 10 and 12 total info sources just to see if the same patterns would arise}
% For each builder-count setting, we follow the standard simulation workflow with five source layouts and three random initial builder profiles per layout.

\parhead{Results}
\Cref{fig:exp4} shows that increasing the number of builders does not automatically yield broader source coverage in equilibrium. In the top-left panel, the welfare ratio is non-monotone in $K$. When the number of builders is small, additional builders mainly improve high-value source coverage, so the welfare in equilibrium quickly moves closer to the optimal welfare. Once high-value coverage is nearly saturated, further welfare gains require better peripheral-source coverage. 

The top-right cluster-coverage panel shows that the planner uses additional builders to expand peripheral-source coverage. In equilibrium, however, they remain attracted to regions with strong access to high-value sources, even after those sources are nearly saturated. As a result, larger builder sets can generate redundant high-value coverage rather than broad source coverage, explaining why the welfare ratio first improves and then declines.

\begin{figure}
    \centering
    \includegraphics[width=\linewidth]{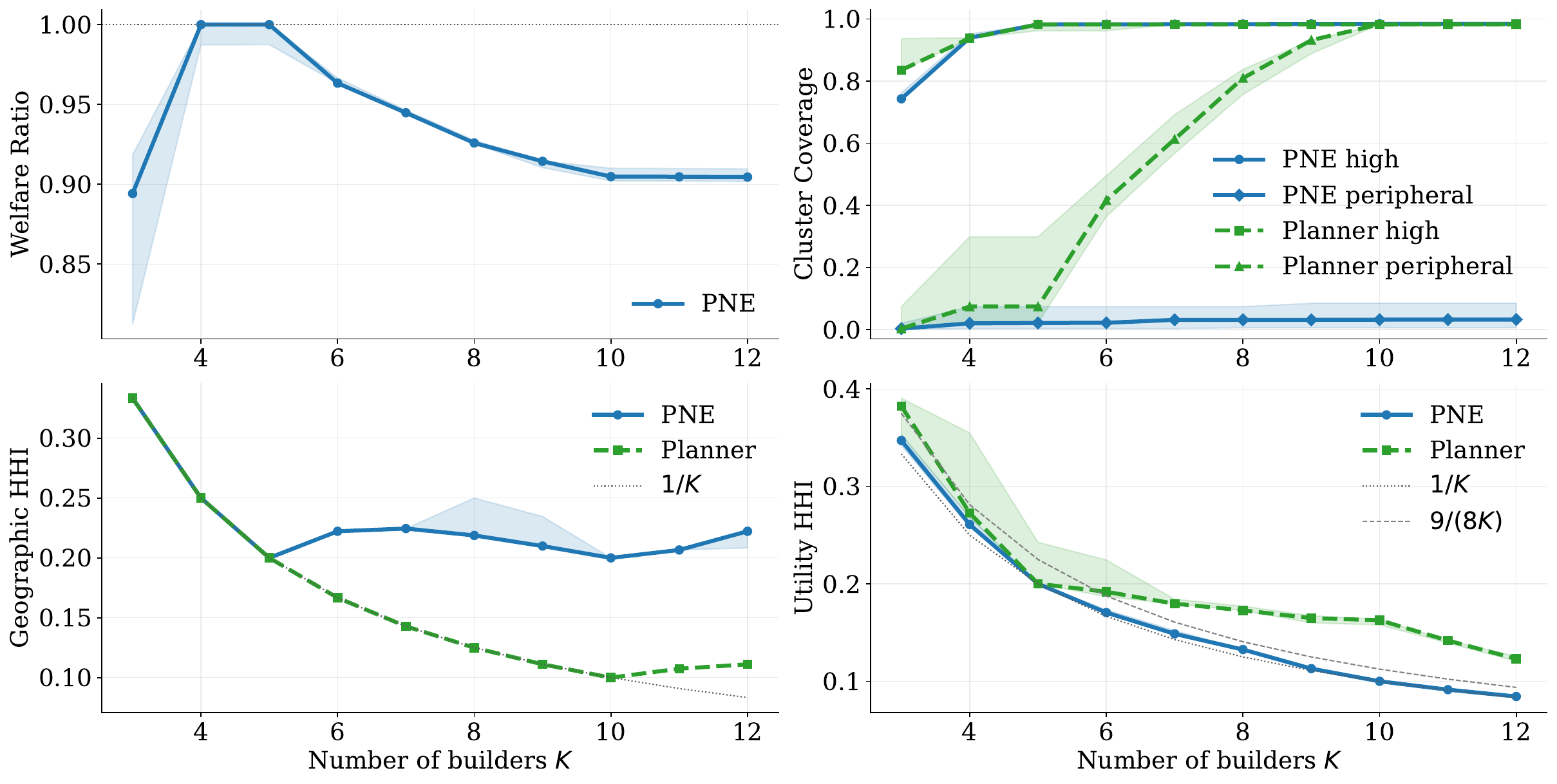}
    \caption{\textit{Effect of builder participation.}  As the number of builders $K$ increases, high-value source coverage is improved in equilibrium, while peripheral coverage remains low, contributing to the welfare ratio first increasing then declining. Geographic \gls{hhi} shows that the builder locations remain more spatially concentrated in equilibrium than the planner benchmark for larger $K$, while utility \gls{hhi} remains comparatively close to the egalitarian baseline.}
    \label{fig:exp4}
\end{figure}

The \gls{hhi} panels reveal a tradeoff between geographic decentralization and utility equality. As the number of builders increases, geographic \gls{hhi} moves closer to the minimum $1/K$ in the planner benchmark, because additional builders expand coverage toward peripheral sources. Builder placement in equilibrium remains more geographically centralized: additional builders continue to locate in regions with stronger access to the high-value cluster, even after high-value coverage is nearly saturated.

The pattern is reversed for utility concentration. Utility \gls{hhi} is closer to the egalitarian baseline in equilibrium, while it is higher in the planner benchmark. Improved peripheral source coverage in the planner benchmark achieves higher total welfare, but yields lower individual utility for those builders located in the peripheral regions. By contrast, in equilibrium, builders are incentivized to locate in high-value regions, where equal sharing can make payoffs more balanced. Thus, geographic decentralization and builder-utility equality need not coincide: improving source coverage and total welfare may require distributing builders into lower-payoff regions, while selfish behavior can achieve a more utility-balanced but geographically more centralized outcome in equilibrium. 

We repeat this experiment with a lower value ratio of $\gamma = 2$, where peripheral sources contribute to a larger share of the total value. The qualitative patterns are similar, but the welfare gap is more pronounced. We provide the figure and discussion in \fullref{app:exp2_low_gamma}.

% The \gls{hhi} panels support the same interpretation.
% Geographic \gls{hhi} falls with $K$ for both ABR and the planner, partly because the benchmark\luis{Do we mean floor/minimum achievable HHI?} $1/K$ also decreases.
% However, the planner stays closer to this benchmark, whereas ABR remains more geographically concentrated once $K$ exceeds the number of high-value locations.
% Utility \gls{hhi} is also higher under ABR over much of the larger-$K$ range, indicating less even payoff distribution across builders.

% Thus, additional participation does not by itself eliminate geographic concentration or payoff inequality when private incentives favor the high-value cluster.

% \boz{I think it's the other way around? the planner hhi is higher because it would place builders to "bad" regions, which won't happen in reality. Similar to results in exp1. I think in general, as utility asymmetry grows, geo hhi and utility hhi oppose; planner puts builders to periphery, reducing geo hhi but increasing uility hhi, whereas ABR outcome is the other way around. This effect is amplified under shorter slots probably.}
% \boz{revise!}

\section{Discussion}\label{sec:discussion}
We now discuss the implications of our results for decentralized block-building design. The central message is that decentralization is not only a question of how many builders participate, but also of where self-interested builders choose to locate. In our model, the \emph{price of decentralization} \revised{is the welfare loss that arises when builders choose regions according to individual incentives rather than being coordinated to maximize system-wide transaction coverage.}{measures the welfare cost of decentralized, self-interested location decisions relative to a centrally coordinated placement that maximizes transaction coverage. Decentralized decisions may themselves produce a geographically concentrated outcome.} We then discuss directions for future work and limitations.

\subsection{Implications for decentralized block-building design}
Our theory and simulations point to a structural feature of decentralized block-building or \gls{mcp} designs. 
% adding multiple builders is not sufficient by itself to guarantee broad transaction coverage or welfare-efficient placement. 
Multiple builders reduce the monopoly power of a single proposer, but if their location choices are uncoordinated, equilibrium placement can still duplicate coverage around privately attractive high-value sources, weakening the system's censorship-resistance and fair-access guarantees.

The simulations identify where this price of decentralization is most visible. Welfare losses are largest in intermediate regimes where peripheral sources are reachable and still contribute meaningful value, but decentralized equilibrium incentives still favor regions with strong access to high-value sources. Thus, inefficiency is not driven simply by source-value asymmetry or by latency sensitivity alone, but by their interaction: peripheral coverage must be feasible enough to matter, while still requiring builders to locate away from the most privately attractive regions.

% From a protocol-design perspective, parameters that affect latency sensitivity can shape the value of geographic placement.
From a protocol-design perspective, the relevant timing parameter is not slot duration in isolation, but slot duration relative to transaction propagation latencies.
Longer slot durations or more relaxed (sub-)block construction deadlines can make transaction coverage less geography-dependent in our model, but geographic centralization may persist when its welfare penalty becomes small. Conversely, shorter deadlines make source--builder proximity more important, while transaction coverage is restricted to be local, limiting coverage of distant sources. The fragile regime lies between these extremes, where cross-region coverage is possible but still placement-sensitive.

Increasing builder participation expands potential coverage capacity. However, in the finite regimes our simulations examine (up to 12 builders), additional builders may not cover peripheral sources immediately. When individual incentives favor high-value source access, extra builders can duplicate coverage around already attractive regions. The results also show that geographic and utility concentration capture different objectives. Planner allocations may be geographically more dispersed but utility-wise more unequal, because the planner places builders in lower-payoff peripheral regions to improve total coverage. Equilibrium profiles may instead be more geographically concentrated while keeping builder utilities relatively balanced under equal sharing.

Since source-value asymmetry is largely an environmental primitive, while timing, builder participation, and reward-sharing rules are at least partly protocol design choices, these parameters are natural levers for shaping transaction coverage incentives.
% protocol design can more directly affect the price of decentralization through timing, participation rules, and reward-sharing rules. 
Overall, protocol designers should distinguish between maximizing welfare and coverage, improving geographic decentralization, and equalizing builder rewards, since these objectives need not move together.

\subsection{Future work}
\parhead{Location-market modeling}
Our model treats decentralized block building as a stochastic coverage game. Transaction sources are exogenous: they emit transactions, builders choose regions, and latency determines whether each transaction reaches at least one builder before the deadline. This formulation is useful for studying source-side coverage and the welfare loss from redundant coverage of the same sources.

A richer model would endogenize transaction routing. In such a model, users or applications choose which builders to send transactions to, and the builder location affects which transactions a builder can serve with sufficiently low latency. Builder rewards would then depend not only on which sources are reachable, but also on how users route order flow and how builders compete for that flow. This would make the model closer to a location-market game than a pure coverage game. 

Such a formulation may be important for protocols and applications in which latency-sensitive users actively choose among builders, such as the Solana ecosystem with Constellation~\cite{kniep_solana_constellation_2026}. High-value order-flow sources, such as exchanges or latency-sensitive application servers, may emerge as strategic hubs: co-locating builders may receive higher-quality order flow or offer stronger latency guarantees. We leave the analysis of endogenous transaction routing and builder competition for future work.

% Our model treats decentralized block building as a stochastic coverage game, emphasizing whether each transaction reaches at least one builder before the deadline. An alternative approach is to model decentralized block building as a location-based market game, in the spirit of facility-location games~\cite{vetta2002utilitygames,Tardos_Wexler_2007}. In such a model, users or applications route transactions to reachable builders, and builder location determines market reachability or service quality through latency. Competition among builders that can serve the same market then determines the inclusion fee or reward that can be charged or shared.

% This perspective is closer to transaction-routing designs such as Solana's Constellation~\cite{kniep_solana_constellation_2026}. High-value order-flow sources, such as exchanges or latency-sensitive application servers, may emerge as economically significant hubs, and builders colocated with or in close network proximity to these hubs may obtain access to higher-quality order flow or be able to offer tighter latency guarantees. The resulting equilibrium geography of builders can differ from the coverage-game formulation analyzed in this work. However, provided that the induced strategic interaction continues to satisfy the axioms of valid utility games~\cite{vetta2002utilitygames}, analogous worst-case bounds on efficiency loss may still be derived.

\parhead{Alternative reward-sharing rules}
Our analysis focuses on equal splitting among builders who receive and include a transaction. Future work could compare this rule with other reward-sharing mechanisms. Winner-takes-all or unique-inclusion rules may reduce duplicate inclusion, but can strengthen incentives for exclusivity, private routing, and reward concentration. Broader committee-level sharing \cite{pranav2025tfmmcp} may encourage builders to maximize aggregate source coverage, but can create free-riding incentives if rewards are paid to builders that do not contribute to coverage. 

Another natural extension is contribution-weighted or Tullock-style sharing.  The equal-split rule can be viewed as the uniform-score special case of proportional sharing. More general rules could assign contribution scores, for example, based on the source-builder propagation latency, and allocate value proportionally to the score. Such rules interpolate between egalitarian sharing and winner-takes-all allocation, and relate to  work on the blockspace distribution and contest-style mechanisms~\cite{mike2024blockspacedistribution,garimidi_tullock_2025,garimidi2026winnertakeallprocurementauctions}. Studying their effects on welfare, geographic centralization, and reward equality is an important direction for future work.

\subsection{Scope and limitations}
% \done{\stef{This subsection should be rewritten as as a rebuttal to these precise limitations: when stating a limitation, we should explain why this is not important for our how it affects our setting and results. For example: We assume no block-capacity. In reality, this implies that the location-selection game is no longer an exact potential. However (and this is the important part), this would only strengthen our negative/positive result XY  (e.g., low/high HHI, latency effects etc) derived under the potential assumption. Or another example: Empirical calibration would be better, however such data is currently unavailable and we have used the most reliable sources, \cite{dataalways2025geography}. The rough estimates derived from this and other sources, have informed our parameter ranges which cover any plausible practical scenario. Another example: While protocols may impose local deadlines, this significantly complicates the block-building process and generates intricate incentives and dynamics requiring a substantially different approach. Thus, while our model does not cover this case, this does not provide a limitation against the study of current mainstream protocols. Even so, these local deadlines would only amplify the effects XY that lead to our result Z. And so on.}}
We close by clarifying the scope of the model and simulations. The goal of the paper is not to provide a full protocol-level simulator, but to isolate how latency-sensitive transaction value shapes the region choices of self-interested builders in decentralized block-building.

First, our simulations use asynchronous better-response dynamics to select a pure Nash equilibrium. We use this dynamic because it is theoretically aligned with the exact-potential structure of the game under equal sharing and gives a stable equilibrium benchmark. Simultaneous updates or learning dynamics may cycle or induce non-stationary empirical distributions; while studying such out-of-equilibrium behavior is an interesting direction, it is outside the scope of this paper. We use ABR as an equilibrium-selection procedure, not as a literal description of how all builders relocate in practice. The underlying incentive tension remains the same: builders prefer regions with stronger access to certain high-value sources, while the system may benefit from broader source coverage. 

Our exact-potential argument guarantees finite convergence of strict asynchronous better-response dynamics, but it does not provide a polynomial bound on the length of improvement paths. Thus, the complexity of computing a pure Nash equilibrium in the builder region game is a natural direction for future work. For finite, explicitly represented instances with polynomial-time evaluable utilities, pure Nash equilibrium computation can be viewed as a local-search problem over the potential function, suggesting that a formal PLS analysis may be appropriate. In our simulations, however, ABR served only as an equilibrium-selection procedure, and all reported runs converged to a no-improving-deviation profile within 250 builder updates. Moreover, the smoothness guarantee in \Cref{thm:poa} extends beyond pure Nash equilibria to weaker equilibrium notions, so the factor-\(2\) efficiency bound is not tied only to the ABR selection procedure.

% This assumes that builders update locations one at a time, which may not literally describe decentralized relocation decisions in practice. Simultaneous responses can create delayed or cyclic adjustments, especially when builders react to stale or perturbed information. However, this affects the path of adoption than the underlying incentives of builders to prefer regions with stronger access to high-value sources. Moreover, the welfare guarantee in \Cref{thm:poa} extends beyond pure Nash equilibria to weaker equilibrium notions, as noted in \Cref{rem:broader_eq}. Thus, converged equilibria should be interpreted as a stable benchmark for studying the location incentives, rather than as a literal description of every adjustment path.

Second, we abstract from many communication details. We model source-to-builder latency directly and use a common deadline for transaction reception. In practice, protocols may involve peer-to-peer propagation, intermediaries (e.g., relays), committee attestation, and leader aggregation. 
%local deadlines for reaching an aggregating leader or committee, and incentives to reach attesters or validators. 
These features would require a richer model of consensus-layer and network communication. However, they do not remove the central mechanism studied here: location affects timely access to payoff-relevant information. Additional latency-sensitive targets may, in fact, create more location incentives~\cite{yang_geographical_2025}, reinforcing the same kind of concentration pressure.

% further incentives to locate near high-value sources or other payoff-critical entities, similar to the concentration incentives observed in Ethereum block-building architectures~\cite{yang_geographical_2025}.

Third, our simulations use stylized source locations and value asymmetries rather than direct empirical estimates of order-flow geography. Such calibration is difficult because high-value transaction sources and their value distribution are not directly observable.
%it requires identifying high-value transaction sources and estimating their value distribution. To mitigate this, 
We therefore use empirical observations on geographic latency competition in Ethereum block building~\cite{dataalways2025geography} to instantiate the region graph and sweep broad ranges of value asymmetry, slot duration, and builder participation. The simulation results should be read as comparative statics over plausible environments, rather than as a point estimate for a particular network.

Finally, we omit several strategic features that are important for full protocol design. We impose no block-capacity constraint, fix the builder set, and rule out Sybil behavior and collusion. Capacity constraints would couple transaction selection decisions and may strengthen the welfare-loss channel we identify, because redundant inclusion of already-covered high-value transactions could crowd out otherwise uncovered transactions. Entry and rotation would affect which builders are active in a given round and could change equilibrium selection. Sybil behavior and collusion would require a richer model of coalitional incentives, where coordinated builders may improve welfare by avoiding redundant coverage, but they could also undermine the censorship-resistance benefits of decentralized block building by behaving as a single economic entity. These extensions are therefore important for full protocol design, but complementary to the location-choice mechanism studied here, which focuses on the inefficiency of uncoordinated self-interested placement.

\section{Conclusion}\label{sec:conclusion}
We studied location choice in decentralized block building by modeling it as a stochastic coverage game. Our results show that replacing a single proposer with multiple builders is not sufficient, by itself, for broad transaction coverage or welfare-efficient placement. Under equal sharing, the builder region game admits a pure Nash equilibrium, has an asymptotically tight factor-$2$ \gls{poa} bound, and satisfies tight bounds on builder utility concentration. Our simulations complement these worst-case guarantees by showing that welfare losses are most pronounced in intermediate regimes where peripheral sources are reachable and valuable, but self-interested builders still prefer regions with strong access to high-value sources. Overall, the paper highlights that the benefits of decentralized block building depend not only on the number of builders, but also on where builders locate and how protocol rules shape their incentives. Timing parameters, participation rules, and reward-sharing mechanisms can therefore affect the price of decentralization, especially in networks with asymmetric source values and latency-sensitive transaction propagation.

%%
%% Bibliography
%%

%% Please use bibtex, 

\bibliography{reference}

\iffullversion
    % \crefalias{section}{appendix}
    % \crefalias{subsection}{appendix}
    % \crefalias{subsubsection}{appendix} 
    \appendix
    \section{Omitted Proofs}\label{app:proofs}

\begin{proof}[Proof of \Cref{prop:potential}]
Fix a builder $b$, a profile $\mathbf{s}$, and a deviation $s_b'\in\mathcal{R}$. We compare the expected change in potential with the expected change in builder $b$'s reward.

Consider a realized transaction $j$ emitted by source $I$ at time $t_j$, with value $V_j$. Conditional on the realized transaction set, values, emission times, and the reception outcomes of all builders other than $b$, define
\[
n_j^{-b}
:=
\left|\{c\in\mathcal{B}\setminus\{b\}: j\in\mathcal{T}_c\}\right|.
\]

A unilateral deviation by builder $b$ does not affect which transactions are emitted, their values or emission times, or the reception outcomes of builders $c\neq b$. Therefore, $n_j^{-b}$ is the same under $\mathbf{s}$ and under $(s_b',\mathbf{s}_{-b})$.

If builder $b$ chooses region $s_b$, then it receives transaction $j$ with probability
\[
q_j(s_b) := q_{r(I)}(s_b,t_j).
\]
Conditional on $n_j^{-b}$, builder $b$'s expected reward from transaction $j$ is
\[
V_j \cdot \frac{q_j(s_b)}{n_j^{-b}+1}.
\]
Similarly, if builder $b$ deviates to $s_b'$, its conditional expected reward from transaction $j$ is
\[
V_j \frac{q_j(s_b')}{n_j^{-b}+1}.
\]
Thus, the conditional expected change in builder $b$'s reward from transaction $j$ is
\[
V_j
\frac{q_j(s_b')-q_j(s_b)}{n_j^{-b}+1}.
\]

Now consider the contribution of transaction $j$ to the potential. If builder $b$ chooses $s_b$, then the number of builders receiving $j$ is $n_j^{-b}$ with probability $1-q_j(s_b)$ and $n_j^{-b}+1$ with probability $q_j(s_b)$. Hence the conditional expected potential contribution of transaction $j$ is
\[
V_j\left[
(1-q_j(s_b))H_{n_j^{-b}}
+
q_j(s_b)H_{n_j^{-b}+1}
\right].
\]
After the deviation to $s_b'$, the corresponding contribution is
\[
V_j\left[
(1-q_j(s_b'))H_{n_j^{-b}}
+
q_j(s_b')H_{n_j^{-b}+1}
\right].
\]

Therefore, the conditional expected change in potential from transaction $j$ is
\begin{align*}
&V_j\left[
(1-q_j(s_b'))H_{n_j^{-b}}
+
q_j(s_b')H_{n_j^{-b}+1}
\right]
-
V_j\left[
(1-q_j(s_b))H_{n_j^{-b}}
+
q_j(s_b)H_{n_j^{-b}+1}
\right] \\
&=
V_j\bigl(q_j(s_b')-q_j(s_b)\bigr)
\bigl(H_{n_j^{-b}+1}-H_{n_j^{-b}}\bigr) \\
&=
V_j
\frac{q_j(s_b')-q_j(s_b)}{n_j^{-b}+1},
\end{align*}
where the last equality uses $H_{m+1}-H_m=1/(m+1)$.

This equals the conditional expected change in builder $b$'s reward from transaction $j$. Summing over all realized transactions and taking expectations over transaction generation and the reception outcomes of the other builders gives
\[
\Phi(s_b',\mathbf{s}_{-b})-\Phi(\mathbf{s})
=
u_b(s_b',\mathbf{s}_{-b})-u_b(\mathbf{s}).
\]
Thus, $\Phi$ is an exact potential. Since the action set $\mathcal{R}$ is finite for every builder, the game is a finite exact potential game.

\end{proof}

\begin{proof}[Proof of \Cref{lem:welfare_coverage}]
Fix a profile $\mathbf{s}$ and a source $I$. For a transaction emitted by source $I$ at time $t$, the probability that at least one builder receives it before the deadline is
\[
f_I(t;\mathbf{s})
=
1-\prod_{b\in\mathcal{B}}
\bigl(1-q_{r(I)}(s_b,t)\bigr),
\]
where the product form follows from conditional independence of receptions across builders.

Therefore, the expected value from source $I$ that is covered by at least one builder is
\[
\mathbb{E}\!\left[
\sum_{j\in\mathcal{J}_I}
V_j \mathbf{1}\{n_j\ge 1\}
\right]
=
\mathbb{E}\!\left[
\sum_{j\in\mathcal{J}_I}
V_j f_I(t_j;\mathbf{s})
\right].
\]
Using the expected value-flow assumption with
\[
h(t)=f_I(t;\mathbf{s}),
\]
we obtain
\[
\mathbb{E}\!\left[
\sum_{j\in\mathcal{J}_I}
V_j f_I(t_j;\mathbf{s})
\right]
=
\Lambda_I
\int_0^\Delta
\rho_I(t) f_I(t;\mathbf{s})\,dt
=
\Lambda_I \bar f_I(\mathbf{s}).
\]
Summing over all sources $I\in\mathcal{I}$ gives the result.
\end{proof}

\begin{proof}[Proof of \Cref{lem:aggregate_utility}]
For any realized transaction $j$, the total reward distributed across builders is
\[
\sum_{b\in\mathcal{B}}
V_j \frac{\mathbf{1}\{j\in\mathcal{T}_b\}}{n_j}
=
V_j\mathbf{1}\{n_j\ge 1\},
\]
with the convention that the summand is zero when $n_j=0$. Summing over all transactions and sources and taking expectations gives
\[
\sum_{b\in\mathcal{B}} u_b(\mathbf{s})
=
\mathbb{E}\!\left[
\sum_{I\in\mathcal{I}}\sum_{j\in\mathcal{J}_I}
V_j\mathbf{1}\{n_j\ge 1\}
\right]
=
W(\mathbf{s}).
\]
\end{proof}

\begin{proof}[Proof of \Cref{lem:smoothness}]
The proof proceeds in three steps.

\paragraph*{Step 1: Lower bound individual payoff by exclusive coverage.}
Consider builder $b$ deviating to $s_b^*$ while all other builders remain at $\mathbf{s}_{-b}$. Under the deviation profile $(s_b^*,\mathbf{s}_{-b})$, builder $b$ receives the full value of any transaction that no other builder receives. Hence, by retaining only these exclusively covered transactions, we obtain the lower bound
\begin{equation}\label{eq:exclusive_lb}    
u_b(s_b^*,\mathbf{s}_{-b})
\geq
\sum_{I \in \mathcal{I}} \Lambda_I \int_0^\Delta \rho_I(t)\,
q_{r(I)}(s_b^*, t)
\prod_{b' \neq b}\bigl(1 - q_{r(I)}(s_{b'}, t)\bigr)\, dt.
\end{equation}

\paragraph*{Step 2: A pointwise coverage inequality.}
Summing~\eqref{eq:exclusive_lb} over all $b\in\mathcal{B}$ gives a cross-profile coverage term: builder $b$ is evaluated at its optimal-region deviation $s_b^*$, while all other builders remain at their current regions $\mathbf{s}_{-b}$. It therefore suffices to show that, for each source $I$ and each $t \in [0,\Delta]$,
\begin{equation}\label{eq:step2_target}
\sum_{b\in\mathcal{B}} q_{r(I)}(s_b^*, t)
\prod_{b' \neq b}\bigl(1 - q_{r(I)}(s_{b'}, t)\bigr)
\;\geq\;
f_I(t;\, \mathbf{s}^*) - f_I(t;\, \mathbf{s}),
\end{equation}
where $f_I(t;\, \mathbf{s})$ is the coverage probability at time $t$ from \Cref{def:coverage}.

Since $1 - q_{r(I)}(s_b,t) \leq 1$, dropping the factor for builder $b$ from the product can only increase it:
\begin{equation}
\prod_{b' \neq b}\bigl(1 - q_{r(I)}(s_{b'},t)\bigr)
\geq
\prod_{b' \in \mathcal{B}}\bigl(1 - q_{r(I)}(s_{b'},t)\bigr)
= 1 - f_I(t;\, \mathbf{s}).
\end{equation} 
Substituting:
\begin{equation}\label{eq:substitute}
\sum_{b\in\mathcal{B}} q_{r(I)}(s_b^*,t)
\prod_{b' \neq b}\bigl(1 - q_{r(I)}(s_{b'},t)\bigr)
\;\geq\;
\Bigl(\sum_{b\in\mathcal{B}} q_{r(I)}(s_b^*,t)\Bigr)
\cdot \bigl(1 - f_I(t;\, \mathbf{s})\bigr).
\end{equation}
Next, the coverage probability under $\mathbf{s}^*$ is the probability that at least one builder receives the transaction. By the union bound, this probability is at most the sum of individual reception probabilities:
\begin{equation}\label{eq:union_bound}
f_I(t;\mathbf{s}^*)
=
1 - \prod_{b\in\mathcal{B}}\bigl(1 - q_{r(I)}(s_b^*,t)\bigr)
\leq
\sum_{b\in\mathcal{B}} q_{r(I)}(s_b^*,t).
\end{equation}
Combining \eqref{eq:substitute} and \eqref{eq:union_bound}:
\begin{equation}\label{eq:sub_union_combine}
\sum_{b\in\mathcal{B}} q_{r(I)}(s_b^*,t)
\prod_{b' \neq b}\bigl(1- q_{r(I)}(s_{b'},t)\bigr)
\;\geq\;
f_I(t;\, \mathbf{s}^*)\cdot \bigl(1 - f_I(t;\, \mathbf{s})\bigr).
\end{equation}
Since $f_I(t;\, \mathbf{s}^*) \leq 1$,
\begin{equation}\label{eq:step2_final}
f_I(t;\, \mathbf{s}^*)\cdot \bigl(1 - f_I(t;\, \mathbf{s})\bigr)
= 
f_I(t;\, \mathbf{s}^*) - f_I(t;\, \mathbf{s}^*) \cdot f_I(t;\, \mathbf{s})
\;\geq\;
f_I(t;\, \mathbf{s}^*) - f_I(t;\, \mathbf{s}),  
\end{equation}
which establishes \eqref{eq:step2_target}.

\paragraph*{Step 3: Integrate over sources and emission times.}
Combining Steps 1 and 2:
\begin{align}
\sum_{b\in\mathcal{B}} u_b(s_b^*,\, \mathbf{s}_{-b})
&\geq
\sum_{I \in \mathcal{I}} \Lambda_I \int_0^\Delta \rho_I(t)\,
\bigl[f_I(t;\, \mathbf{s}^*) - f_I(t;\, \mathbf{s})\bigr]\, dt \\
&=
\sum_{I \in \mathcal{I}} \Lambda_I
\bar f_I(\mathbf{s}^*) - \sum_{I \in \mathcal{I}} \Lambda_I
\bar f_I(\mathbf{s}) \\
&=
W(\mathbf{s}^*) - W(\mathbf{s}).
\end{align}
where the last equality follows from~\Cref{lem:welfare_coverage}. This proves the $(1,1)$-smoothness inequality.
\end{proof}

\begin{proof}[Proof of \Cref{thm:poa}]
Let $\mathbf{s}\in\mathcal{E}$ be a pure Nash equilibrium, and let $\mathbf{s}^*$ be a welfare-maximizing profile. For every builder $b\in\mathcal{B}$, unilateral optimality implies
\[
u_b(\mathbf{s})
\geq
u_b(s_b^*,\mathbf{s}_{-b}),
\]
where $s_b^*$ denotes builder $b$'s region in $\mathbf{s}^*$.

Summing over all builders and using~\Cref{lem:aggregate_utility}, we obtain
\[
W(\mathbf{s})
=
\sum_{b\in\mathcal{B}}u_b(\mathbf{s})
\geq
\sum_{b\in\mathcal{B}}u_b(s_b^*,\mathbf{s}_{-b}).
\]
By \Cref{lem:smoothness} with the benchmark profile $\mathbf{s}^*$,
\[
\sum_{b\in\mathcal{B}}u_b(s_b^*,\mathbf{s}_{-b})
\geq
W(\mathbf{s}^*)-W(\mathbf{s}).
\]
Therefore,
\[
W(\mathbf{s})
\geq
W(\mathbf{s}^*)-W(\mathbf{s}),
\]
which implies
\[
W(\mathbf{s}) \geq \frac{1}{2}W(\mathbf{s}^*).
\]
Taking the minimum over all $\mathbf{s}\in\mathcal{E}$ gives $\mathrm{PoA}\le 2$.
\end{proof}

\begin{proof}[Proof of ~\Cref{prop:tightness}]
Consider $K$ builders and $K$ regions $r_0, r_1,\dots,r_{K-1}$. For each $i=0,\dots,K-1$, let source $I_i$ be located in region $r_i$ and suppose that it is received only by builders located in region $r_i$. That is, for all $t\in[0,\Delta]$,
\[
q_{r(I_i)}(r_i,t)=1,
\qquad
q_{r(I_i)}(r,t)=0
\quad\text{for all } r\neq r_i.
\]

Assign expected emitted values
\[
\Lambda_{{I_0}} = K,
\qquad
\Lambda_{I_i}=1-\varepsilon
\qquad 
\text{for }
i = 1,\dots,K-1.
\]

Consider the profile $\mathbf{s}$ in which all $K$ builders choose region $r_0$, i.e., $s_b=r_0$ for all $b\in\mathcal{B}$. Under the equal-split sharing rule, every transaction from source $I_0$ is received by all $K$ builders and shared equally among them. Since the expected emitted value of $I_0$ is $K$, each builder's expected utility is
\[
u_b(\mathbf{s})=\frac{\Lambda_{I_0}}{K}=\frac{K}{K}=1.
\]

If builder $b$ deviates to some region $r_i$ with $i\geq 1$, then it receives only transactions from source $I_i$. Its expected utility from the deviation is therefore
\[
u_b(r_i,\mathbf{s}_{-b})=\Lambda_{I_i}=1-\varepsilon<1.
\]
Deviating to any secondary source's region yields the same payoff, which is lower than the payoff from remaining at $r_0$. Hence, $\mathbf{s}$ is a strict pure Nash equilibrium.

The welfare under profile $\mathbf{s}$ is 
\[
W(\mathbf{s}) = \Lambda_{I_0}=K.
\]

A welfare-maximizing profile $\mathbf{s}^*$ places one builder at each region $r_i$, so that all $K$ sources are covered. Its welfare is therefore
\[
W(\mathbf{s}^*) = \sum_{i=0}^{K-1}\,\Lambda_{I_i} = K+(1-\varepsilon)(K-1)
\]
Thus,
\[
\frac{W(\mathbf{s}^*)}{W(\mathbf{s})}
= 
\frac{K+(1-\varepsilon)(K-1)}{K}
=
2 - \frac{1}{K} - \frac{K-1}{K}\varepsilon.
\]
Letting $K \to +\infty$ and $\varepsilon \to 0^+$ shows that the \gls{poa} can be arbitrarily close to $2$.
\end{proof}

\added{
\begin{proof}[Proof of \Cref{remark:pos}]
It remains to show that $\mathbf{s}$ is the unique PNE. Consider any profile with a builder in some secondary region $r_i, i \geq1$. Let $m_i \geq 1$ be the number of builders in the secondary region $r_i$, and $m_0 \leq K-1$ be the number of builders in $r_0$. A builder in $r_i$ obtains utility
\[
\frac{1-\varepsilon}{m_i}\leq 1-\varepsilon,
\]
whereas moving to $r_0$ yields 
\[
\frac{K}{(m_0+1)}\geq1.
\]
The builder in $r_i$ has a strictly profitable deviation. Thus, every PNE places all builders in $r_0$, proving uniqueness. Since the unique PNE has the welfare ratio above, $\mathrm{PoS} \leq \mathrm{PoA} \leq 2.$
\end{proof}
} 

\begin{proof}[Proof of \Cref{lem:utility_dispersion}]
Let
\[
\underline b \in \arg\min_{b\in\mathcal{B}} u_b(\mathbf{s}),
\qquad
\bar b \in \arg\max_{b\in\mathcal{B}} u_b(\mathbf{s}).
\]
Since $\mathbf{s}$ is a Nash equilibrium, builder $\underline b$ cannot improve its utility by moving to the region chosen by builder ${\bar b}$. Hence,
\[
u_{\underline b}(\mathbf{s})
\geq
u_{\underline b}(s_{\bar b},\mathbf{s}_{-\underline b}).
\]
It is therefore enough to prove that
\[
u_{\underline b}(s_{\bar b},\mathbf{s}_{-\underline b})
\geq
\frac{1}{2}u_{\bar b}(\mathbf{s}).
\]

Fix a realized transaction $j$ with value $V_j$, emitted by source $I$ at time $t_j$. Condition on the reception outcomes of all builders in $\mathcal{B}\setminus\{\underline b,\bar b\}$. Let
\[
m
:=
\left|\{c\in\mathcal{B}\setminus\{\underline b,\bar b\}: j\in\mathcal{T}_c\}\right|
\]
be the number of builders other than $\underline b$ and $\bar b$ that receive $j$, and write
\[
q_j(s_{\bar b}):=q_{r(I)}(s_{\bar b},t_j).
\]

% Let $x$ be the number of those builders that receive $j$, and let
% \begin{equation*}
% q_j := q_{I(j)}(s_{\bar b}, t_j).
% \end{equation*}
Under the original profile $\mathbf{s}$, builder $\bar b$ receives $j$ with probability $q_j(s_{\bar b})$. Conditional on receiving it, at least $m$ other builders also receive $j$. Therefore, builder $\bar b$'s share is at most $1/(m+1)$, and its conditional expected reward from $j$ is at most
\[
V_j \frac{q_j(s_{\bar b})}{m+1}.
\]

Now suppose builder $\underline b$ deviates to region $s_{\bar b}$. Then builder $\underline b$ receives $j$ with the same probability $q_j(s_{\bar b})$. Conditional on receiving it, at most $m+1$ other builders can also receive $j$: the $m$ fixed builders and possibly builder $\bar b$. Therefore, builder $\underline b$'s share is at least $1/(m+2)$, and its conditional expected reward from $j$ is at least
\[
V_j \frac{q_j(s_{\bar b})}{m+2}.
\]
Since
\[
V_j \frac{q_j(s_{\bar b})}{m+2}
\geq
\frac{1}{2}V_j \frac{q_j(s_{\bar b})}{m+1},
\]
the deviating builder's conditional expected reward from transaction $j$ is at least half of builder $\bar b$'s conditional expected reward from $j$.

Summing over all realized transactions and taking expectations gives
\[
u_{\underline b}(s_{\bar b}, \mathbf{s}_{-\underline b})
\geq
\frac{1}{2} u_{\bar b}(\mathbf{s}).
\]

Combining this with the Nash equilibrium inequality yields
\[
u_{\min}(\mathbf{s})
=
u_{\underline b}(\mathbf{s})
\geq
\frac{1}{2} u_{\max}(\mathbf{s}).
\]
\end{proof}

\begin{proof}[Proof of \Cref{prop:utility_dispersion_tight}]
Consider two builders $b_1,b_2\in\mathcal{B}$, two regions $r_1,r_2$, and two sources $I_1,I_2$, where source $I_i$ is located in region $r_i$. Suppose reception is deterministic and exclusive: for all $t\in[0,\Delta]$ and $i\in\{1,2\}$,
\[
q_{r(I_i)}(r_i,t)=1,
\qquad
q_{r(I_i)}(r,t)=0
\quad
\text{for all } r\neq\,r_i.
\]

Let
\[
\Lambda_{I_1}=2,
\qquad
\Lambda_{I_2}=1,
\]
and consider the profile
\[
\mathbf{s}=(s_{b_1}=r_1,\;s_{b_2}=r_2).
\]
Then
\[
u_{b_1}(\mathbf{s})=2,
\qquad
u_{b_2}(\mathbf{s})=1.
\]

If $b_2$ deviates to $r_1$, both builders receive only transactions from source $I_1$ and split its value equally. Thus
\[
u_{b_2}(r_1,\mathbf{s}_{-b_2})
=
\frac{\Lambda_{I_1}}{2}
=
1
=
u_{b_2}(\mathbf{s}).
\]

If $b_1$ deviates to $r_2$, both builders receive only transactions from source $I_2$ and split its value equally. Thus 
\[
u_{b_1}(r_2,\mathbf{s}_{-b_1})
=
\frac{\Lambda_{I_2}}{2}
=
\frac{1}{2}
<
2 = u_{b_1}(\mathbf{s}).
\]
Hence $\mathbf{s}$ is a pure Nash equilibrium. Furthermore,
\[
u_{\min}(\mathbf{s})=1=\frac{1}{2}\cdot 2=\frac{1}{2}u_{\max}(\mathbf{s}).
\]
Therefore, the factor $1/2$ is tight.
\end{proof}

\begin{proof}[Proof of \Cref{prop:utility_hhi}]
Let $x_b := x_b(\mathbf{s})$ and let $a := \min_{b \in \mathcal{B}} x_b.$ By \Cref{lem:utility_dispersion}, $u_{\min}(\mathbf{s}) \ge \frac{1}{2}u_{\max}(\mathbf{s})$, so every utility share satisfies $x_b \in [a,2a].$ Write
\[
x_b = a + y_b,
\qquad
0 \le y_b \le a. 
\]
Since $\sum_{b \in \mathcal{B}} x_b = 1$, we have
\[
\sum_{b \in \mathcal{B}} y_b = 1 - Ka.
\]
Therefore,
\begin{align*}
\mathrm{HHI}(\mathbf{s})
&= \sum_{b \in \mathcal{B}} (a+y_b)^2 \\
&= Ka^2 + 2a \sum_{b \in \mathcal{B}} y_b + \sum_{b \in \mathcal{B}} y_b^2 \\
&\le Ka^2 + 2a(1-Ka) + a(1-Ka) \\
&= 3a - 2Ka^2,
\end{align*}
where we used $y_b^2 \le ay_b$. Since each $x_b \in [a,2a]$ and $\sum_{b \in \mathcal{B}} x_b = 1$, we also have
\[
Ka \le 1 \le 2Ka,
\]
so $a \in [1/(2K),1/K]$. The quadratic $3a - 2Ka^2$ is maximized at $a = 3/(4K)$, which lies in the interval. Hence,
\[
\mathrm{HHI}(\mathbf{s}) \le \frac{9}{8K}.
\]

To see tightness, let $K = 3\ell$. Consider $K$ regions and $K$ sources, each receivable only in its own region with deterministic binary reception. Let $\ell$ sources have expected value $2$, and let the remaining $2\ell$ sources have expected value $1$. Place one builder in each region.

This profile is a pure Nash equilibrium. A builder located at a value-$1$ source is indifferent to moving to a value-$2$ source, where it would split that source with the incumbent builder and obtain payoff $1$. Moving to any other region yields at most $1/2$ payoff. A builder located at a value-$2$ source obtains at most $1$ by deviating, thus cannot improve its payoff. Hence, the profile is a pure Nash equilibrium.

The resulting utilities are $\ell$ copies of $2$ and $2\ell$ copies of $1$. Since total utility is
\[
2\ell + 2\ell = 4\ell = \frac{4K}{3},
\]
the normalized utility shares are $\ell$ copies of $3/(2K)$ and $2\ell$ copies of $3/(4K)$. Hence
\[
\mathrm{HHI}(\mathbf{s})
= \frac{K}{3}\left(\frac{3}{2K}\right)^2
+ \frac{2K}{3}\left(\frac{3}{4K}\right)^2
= \frac{9}{8K}.
\]
Therefore, the factor $9/8$ is tight as a universal bound.
\end{proof}

    \section{List of Google Cloud Platform Regions}
\label{app:gcp-regions}

\begin{table}[H]
\centering
\small
% \resizebox{\linewidth}{!}{
\begin{tabular}{lll}
\toprule
\textbf{Type} & \textbf{GCP region} & \textbf{Location} \\
\midrule
\multirow{11}{*}{High-value source pool}
& us-east1 & Moncks Corner, SC, USA \\
& us-east4 & Ashburn, Virginia, USA \\
& us-central1 & Council Bluffs, Iowa, USA \\
& europe-west1 & St. Ghislain, Belgium \\
& europe-west2 & London, UK \\
& europe-west3 & Frankfurt, Germany \\
& europe-west4 & Eemshaven, Netherlands \\
& europe-north1 & Hamina, Finland \\
& asia-northeast1 & Tokyo, Japan \\
& asia-northeast2 & Osaka, Japan \\
& asia-southeast1 & Singapore \\
\midrule
\multirow{9}{*}{Peripheral source pool}
& us-west1 & The Dalles, Oregon, USA \\
& us-west2 & Los Angeles, California, USA \\
& asia-south1 & Mumbai, India \\
& asia-south2 & Delhi, India \\
& australia-southeast1 & Sydney, Australia \\
& australia-southeast2 & Melbourne, Australia \\
& southamerica-east1 & Sao Paulo, Brazil \\
& southamerica-west1 & Santiago, Chile \\
& africa-south1 & Johannesburg, South Africa \\
\midrule
\multirow{4}{*}{Placement only / intermediate}
& europe-central2 & Warsaw, Poland \\
& europe-southwest1 & Madrid, Spain \\
& me-central1 & Doha, Qatar \\
& me-west1 & Tel Aviv, Israel \\
\bottomrule
\end{tabular}
\caption{Candidate GCP regions and their source-pool assignments.}
\label{tab:gcp-region-pools}
% }
\end{table}

\begin{figure}[H]
    \centering
    \includegraphics[width=0.9\linewidth]{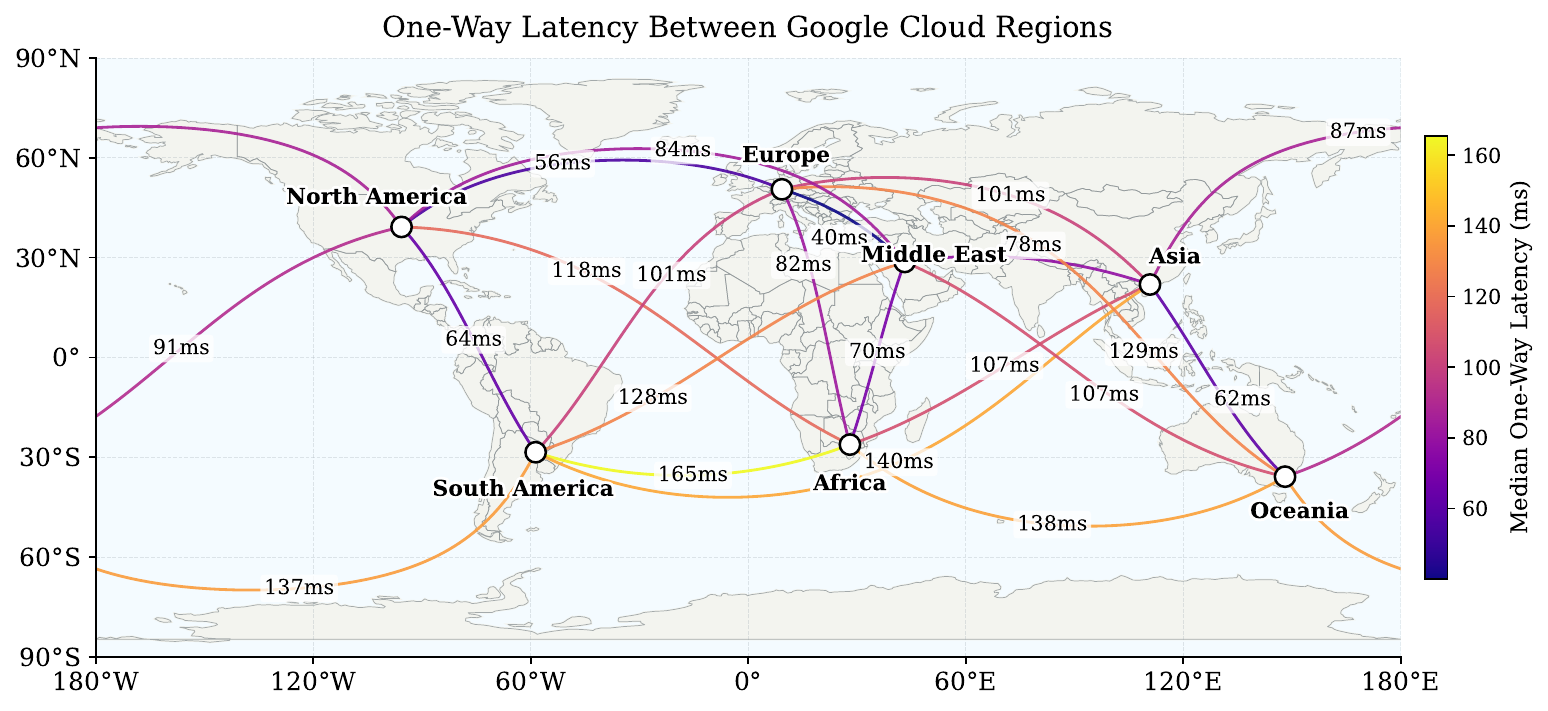}
    \caption{One-way latency between continents. Nodes denote continents obtained by grouping relevant GCP regions. Curves show median one-way latency between continent pairs.}
    \label{fig:latency_map}
\end{figure} 

All experiments use a fixed set of 24 candidate regions in \Cref{tab:gcp-region-pools}, selected from the empirical GCP latency matrix to span North America, Europe, Asia, Australia, South America, Africa, and the Middle East.
For heterogeneous information-source layouts, we distinguish between two source pools: a high-value pool, consisting of regions near major financial or technology centers, and a peripheral pool of regions geographically more remote from these centers.
In each randomized source-layout instance, high-value sources are sampled from the high-value pool and peripheral sources are sampled from the peripheral pool.
Regions outside these pools remain available as builder locations but are not used as source locations in the two-cluster construction. \Cref{fig:latency_map} presents the one-way latency between continents, obtained by grouping GCP regions into their corresponding continents and taking the median one-way latency across all inter-continent GCP region pairs.

\section{Greedy Planner Benchmark Validation}
\label{app:greedy-validation}

For experiments with larger builder-count sweeps, computing the exact planner benchmark by exhaustive search becomes expensive. With $|\mathcal R|=24$ candidate regions and builder co-location allowed, the number of feasible $K$-builder placement multisets is
$\binom{|\mathcal R|+K-1}{K}$.
This number is already $475{,}020$ for $K=6$ and grows to $834{,}451{,}800$ for $K=12$, the largest builder count in our sweep.
We therefore use a greedy welfare-maximization routine as the planner benchmark in the large-$K$ experiments.

To check whether this choice materially affects our reported welfare ratios, we compare the greedy planner benchmark against exhaustive search on tractable instances, covering $K=3$ through $K=7$.
For each tested instance, let $\widehat W^*$ be the planner welfare returned by the greedy routine and let $W^*$ be the exact planner welfare found by exhaustive search.
We report the ratio $\widehat W^*/W^*$ to measure how close the greedy planner benchmark is to the exact planner optimum.

As shown in \Cref{tab:greedy-brute-validation}, the greedy planner matches the exact planner in median for every tested value of $K$.
Across all 25 tested instances, the worst observed greedy-to-exact welfare ratio is $0.999129$, with a maximum welfare gap of $0.000871$ and a maximum induced welfare-ratio overstatement of $0.000872$.
Thus, on tractable instances, the greedy benchmark introduces only a negligible difference relative to exhaustive search.

\begin{table}[H]
\centering
\small
\begin{tabular}{lllll}
\toprule
$K$ & Profiles & Instances & Median $\widehat W^*/W^*$ & Min $\widehat W^*/W^*$ \\
\midrule
3 & 2,600 & 5 & 1.000000 & 1.000000  \\
4 & 17,550 & 5 & 1.000000 & 0.999915 \\
5 & 98,280 & 5 & 1.000000 & 0.999129 \\
6 & 475,020 & 5 & 1.000000 & 0.999390 \\
7 & 2,035,800 & 5 & 1.000000 & 0.999851  \\
\midrule
All & -- & 25 & 1.000000 & 0.999129 \\
\bottomrule
\end{tabular}
\caption{Comparison between the greedy planner benchmark and the exhaustive-search planner benchmark on tractable instances.}
\label{tab:greedy-brute-validation}
\end{table}

\section{Experiments~2 and~4 with lower source-value asymmetry}
\label{app:exp2_low_gamma}
% \label{app:exp4_low_gamma}

\parhead{Experiment~2}
\Cref{fig:exp2_gamma2} repeats the slot-duration experiment with a lower high-to-low value ratio, $\gamma=2$. In this setting, peripheral sources account for a larger share of total emitted value.

The qualitative patterns match \Cref{fig:exp2}, with three differences. The median welfare ratio drops to $0.885$ at $\Delta=100$ms, compared to $0.97$ for $\gamma=10$. Peripheral sources carry a larger share of the total value, so missing their coverage in equilibrium has a larger welfare impact. Peripheral coverage in the planner benchmark also rises at shorter $\Delta$, since each peripheral source is now worth a larger share of the total value. Finally, planner utility HHI rises above the $1/K$ baseline at low slot durations since the planner allocates some builders to peripheral regions, where these builders earn less than builders in high-value regions. In equilibrium, builders remain in high-value regions and utility HHI stays at the floor. Geographic HHI is essentially unchanged from the $\gamma=10$ case.

\begin{figure}[H]
\centering
\includegraphics[width=\linewidth]{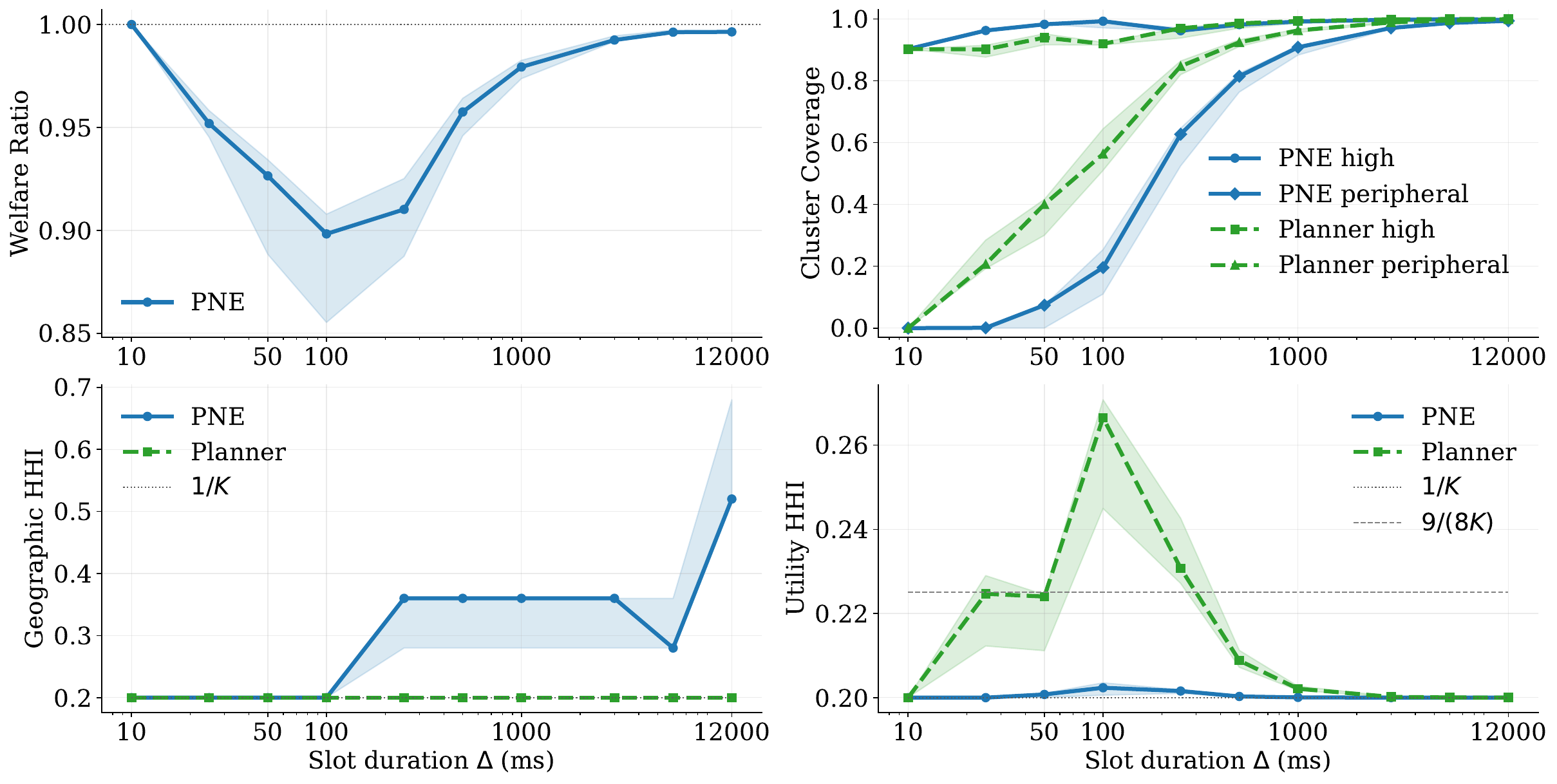}
\caption{\textit{Effect of slot duration at lower source-value asymmetry ($\gamma=2$).} Same setup as \Cref{fig:exp2} but with reduced asymmetry.}
\label{fig:exp2_gamma2}
\end{figure}

\parhead{Experiment~4}
\Cref{fig:exp4_gamma2} repeats the builder-participation experiment with a lower high-to-low value ratio, $\gamma=2$.

The qualitative patterns largely match those in \Cref{fig:exp4}, but the welfare gap is more pronounced. For small $K$, additional builders mainly improve high-value coverage. As $K$ grows, builders are also incentivized to cover peripheral sources, unlike in the higher-asymmetry case, where peripheral coverage remains zero for much of the sweep. However, this peripheral coverage still lags behind the planner, which expands peripheral coverage more aggressively. This explains both the larger welfare loss in the intermediate range of $K$ and partial recovery at larger $K$: once enough builders participate, equilibrium coverage begins to extend beyond the high-value cluster.

The concentration metrics show the same tradeoff as in the main experiment. For larger $K$, equilibrium builder placement remains more geographically centralized than the planner benchmark, while the planner stays at the egalitarian baseline $1/K$. Utility becomes more balanced for both profiles as $K$ grows, but it can remain slightly more unequal in the planner benchmark because the planner assigns more builders to the peripheral regions. Thus, lowering the value asymmetry makes peripheral coverage more important for welfare and widens the gap between equilibrium welfare and the optimum.

% First, the welfare ratio is lower over much of the intermediate-$K$ range.
% With lower source-value asymmetry, peripheral sources account for a larger share of the total value, so the welfare loss from missing peripheral coverage is larger.
% Second, unlike the $\gamma=10$ case, the ABR-selected PNE begins to cover peripheral sources once $K$ is moderately large, although this coverage still lags behind the planner benchmark.
% This explains why the welfare ratio partially recovers at larger $K$: additional builders allow the equilibrium outcome to cover both high-value and peripheral source clusters more effectively.
% Third, the geographic concentration pattern is broadly unchanged: for larger $K$, the outcome of ABR-selected PNE remains more geographically concentrated than the planner benchmark.
% The utility HHI also follows the same decreasing trend as $K$ grows, with the planner remaining slightly more unequal in utility because it assigns some builders to lower-value peripheral coverage positions.

\begin{figure}[H]
\centering
\includegraphics[width=\linewidth]{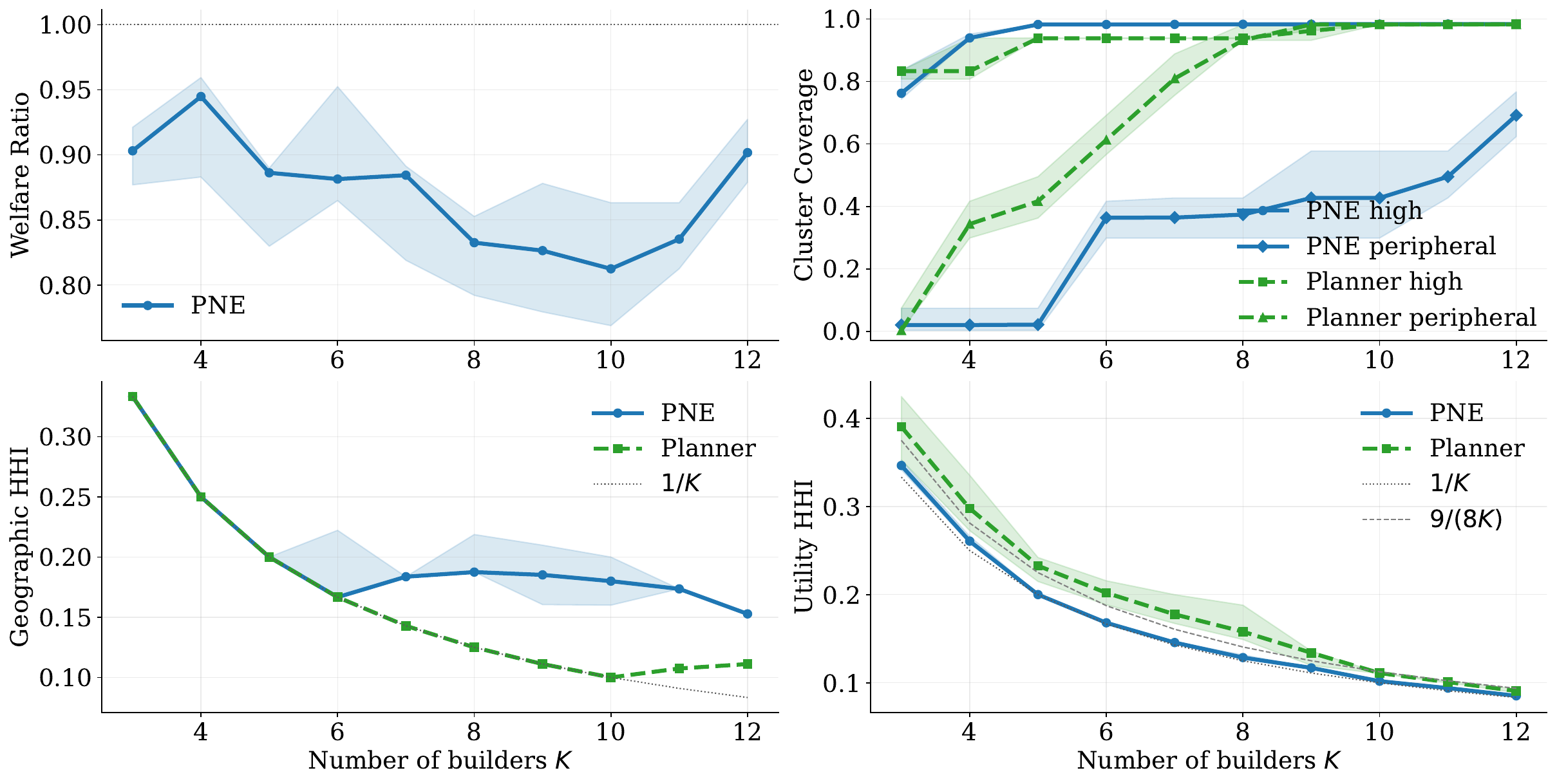}
\caption{\textit{Effect of builder participation at lower source-value asymmetry ($\gamma=2$).} Same setup as \Cref{fig:exp4} but with reduced asymmetry.}
\label{fig:exp4_gamma2}
\end{figure}

\section{Visualizations of PNE--Planner Builder Placement Difference}
\label{app:placement-view}

The welfare, coverage, and concentration metrics present different outcomes in equilibrium and in the planner benchmark, but they do not directly identify the underlying builder placements that drive these differences. To make this comparison more concrete, we visualize builder placements from one representative instance of the simulation experiments.

% \Cref{fig:exp1-allocation} corresponds to Experiment~1, where we vary the high-to-low source value ratio $\gamma$.
% \Cref{fig:exp2-allocation} corresponds to Experiment~2, where we vary the slot duration $\Delta$.
% \Cref{fig:exp4-allocation} corresponds to Experiment~4, where we vary the number of builders $K$.

In each figure, rows correspond to GCP regions and are grouped into high-value source regions, peripheral source regions, and regions without sources.
For each parameter setting, we show both the builder placement in equilibrium and the planner benchmark. The number inside each circle reports the number of builders assigned to that region.

\parhead{Experiment~1: source-value asymmetry}
\Cref{fig:exp1-allocation} visualizes the placement pattern behind Experiment~1.
At low value ratios, the planner assigns some builders to peripheral source regions because peripheral coverage still contributes substantially to welfare. In equilibrium, builders are incentivized to place themselves near high-value regions to access more valuable transactions.
As the value ratio $\gamma$ increases, high-value sources dominate the welfare objective, and the planner also shifts builder placement toward high-value source regions. Thus, the placement difference is most visible at low-to-intermediate value ratios, where peripheral coverage remains useful for the planner but is less attractive to self-interested builders in equilibrium.

% When $\gamma$ is small, the planner assigns some builders to peripheral source regions because peripheral coverage still contributes to welfare.
% The ABR-selected PNE shifts toward high-value source regions earlier, reflecting individual incentives to locate near higher-value sources.
% As $\gamma$ increases, high-value sources dominate the welfare objective, and the planner also concentrates builders in high-value source regions. Thus, the placement gap is clearest at low to intermediate $\gamma$, where peripheral coverage remains useful for the planner but is less attractive to individual builders.

\begin{figure}[htbp]
    \centering
    \includegraphics[width=\linewidth]{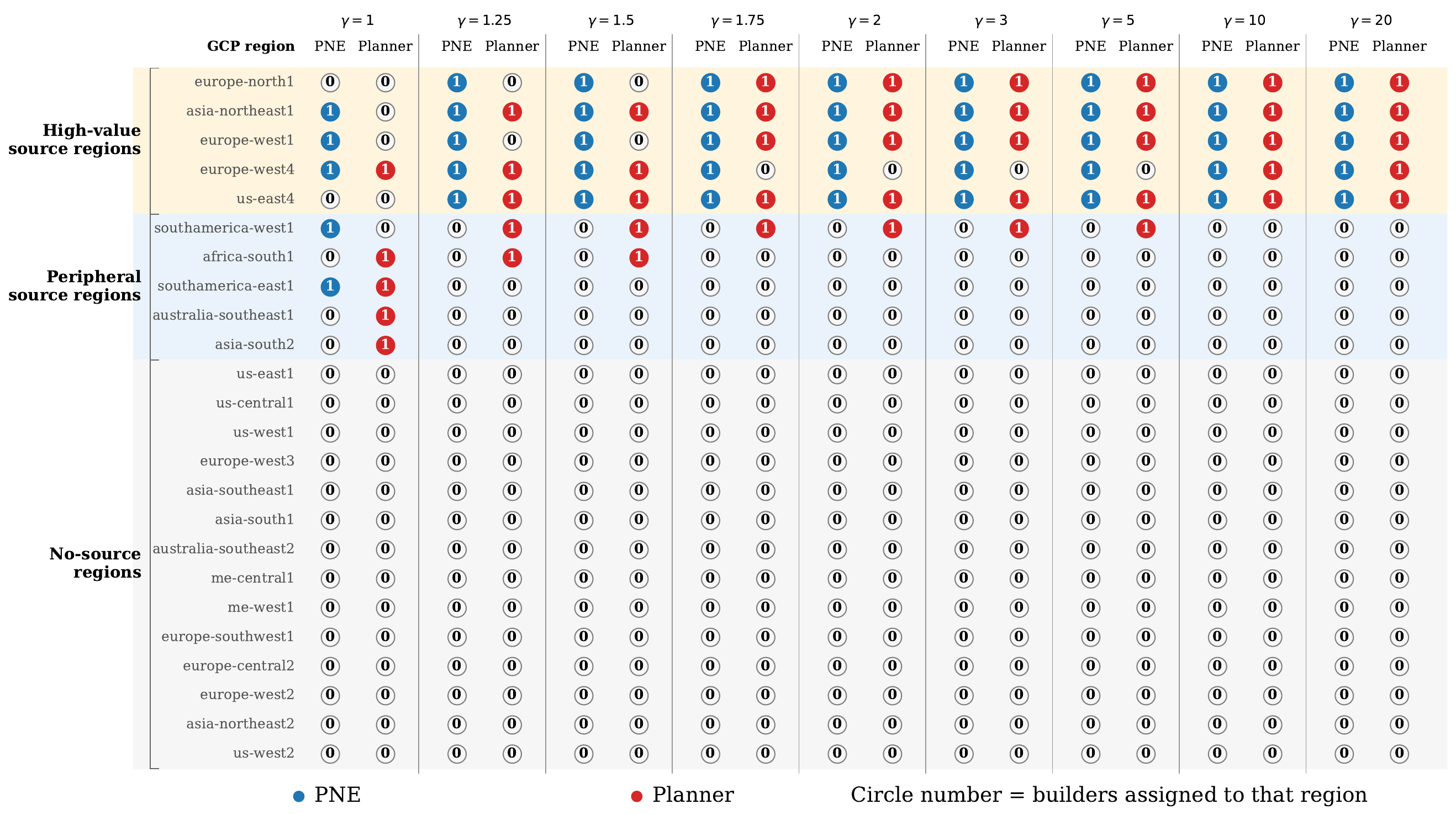}
    \caption{Builder placement by GCP region in equilibrium and the planner benchmark under varying high-to-low source value ratio $\gamma$, with fixed $K=5$ builders and slot duration $\Delta=50$ms. When source-value asymmetry starts to appear, builders are incentivized to locate in high-value source regions in equilibrium, while the planner still assigns builders to both high-value and peripheral source regions. As $\gamma$ increases, builder placement shifts toward high-value source regions in both profiles.}
    \label{fig:exp1-allocation}
\end{figure}

\parhead{Experiment~2: slot duration}
\Cref{fig:exp2-allocation} visualizes the placement pattern behind Experiment~2.
When the slot duration is short, source--builder proximity is critical: builder placements are near high-value source regions in both the equilibrium and the planner benchmark, since distant sources are difficult to cover before the deadline. At intermediate slot durations, peripheral sources become reachable from some regions, but coverage remains placement-sensitive. The planner spreads some builders toward peripheral source regions, while builders concentrate around high-value sources in equilibrium. In this regime, redundant high-value coverage can come at the expense of peripheral coverage. For long slot durations, geography matters less for coverage, and in both profiles, builders may locate in non-source regions that provide broad access to multiple source regions.

% When $\Delta$ is small, both outcomes place builders near high-value source regions because source proximity strongly affects welfare under tight propagation deadlines.
% As $\Delta$ grows, the planner starts assigning builders to peripheral source regions and, for long slots, to no-source regions that improve broader propagation coverage.
% At intermediate slot durations, peripheral sources become reachable, but coverage still depends strongly on where builders are placed. 
% The planner spreads some builders to peripheral source regions, while the ABR-selected PNE remains more concentrated around high-value source regions, leading to redundant high-value coverage and lower peripheral coverage.
% For long slot durations, geography matters less for transaction coverage, and both outcomes may place builders in non-source regions that provide broader propagation coverage.

\begin{figure}[htbp]
    \centering
    \includegraphics[width=\linewidth]{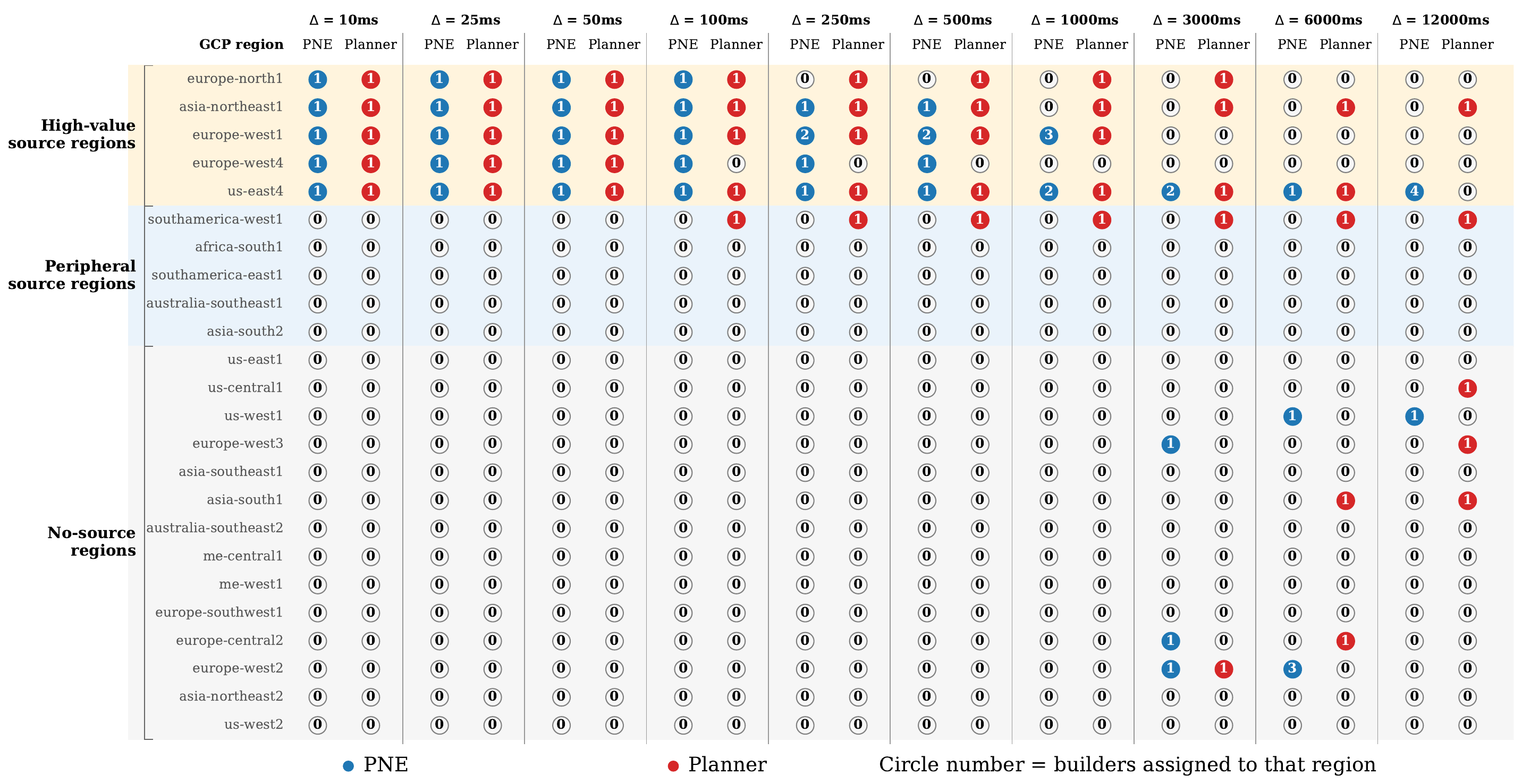}
    \caption{Builder placement by GCP region in equilibrium and the planner benchmark under varying slot duration $\Delta$, with the builder count fixed at $K=5$. For short slot durations, both profiles place most builders in high-value source regions, since source proximity is critical when propagation time is tight. As the slot duration increases, the planner starts assigning builders to peripheral source regions, while builders remain more concentrated on high-value regions in equilibrium. At long slot durations, both profiles place some builders in non-source GCP regions that provide broader propagation coverage.}
    \label{fig:exp2-allocation}
\end{figure}

\parhead{Experiment~3: joint variation of source-value asymmetry and slot duration}
\Cref{fig:exp3-allocation} shows representative placements from Experiment~3, where both the value ratio $\gamma$ and slot duration $\Delta$ vary. We show a small set of parameter settings rather than the full grid. The examples illustrate why neither value asymmetry nor slot duration alone determines placement difference.

When slots are short and value asymmetry is high, both profiles concentrate on high-value source regions, so welfare is similar. At intermediate slot durations and moderate value ratios, peripheral sources are reachable and still valuable.
In this regime, the planner assigns builders to preserve peripheral coverage, while builders remain more concentrated around high-value source access in equilibrium. Increasing $\gamma$ reduces the welfare relevance of peripheral coverage, so the welfare penalty of concentrated placement in equilibrium becomes lower. At long slot durations, coverage becomes less sensitive to exact placement, and both profiles can maintain broader coverage. These examples illustrate the main mechanism behind the welfare gap: peripheral coverage must be feasible and valuable, while individual builder incentives still favor high-value source regions.

% corresponds to Experiment~3, where we study the joint effect of the high-to-low source value ratio $\gamma$ and the slot duration $\Delta$. Because this experiment varies two parameters jointly, we show four representative $(\gamma,\Delta)$ settings rather than the full grid, covering short to long slot durations and moderate to high source-value asymmetry.

% The four examples in \Cref{fig:exp3-allocation} show that neither $\gamma$ nor $\Delta$ alone determines the placement gap.
% With a short slot duration and high source-value asymmetry, both outcomes concentrate on high-value sources, yielding similar welfare.
% With an intermediate slot duration and moderate asymmetry, peripheral sources are reachable and still valuable, so the planner preserves peripheral coverage while the ABR-selected PNE outcome remains concentrated on high-value sources.
% Increasing $\gamma$ in the intermediate-slot regime reduces this gap because high-value coverage becomes more important for both outcomes.
% With a long slot duration and moderate asymmetry, coverage becomes less sensitive to placement, and both outcomes can maintain a broad presence across the two source clusters.
% These examples show that the welfare gap arises most clearly when peripheral coverage is both feasible and valuable, while individual builder incentives still favor high-value source regions.

\begin{figure}
    \centering
    \includegraphics[width=\linewidth]{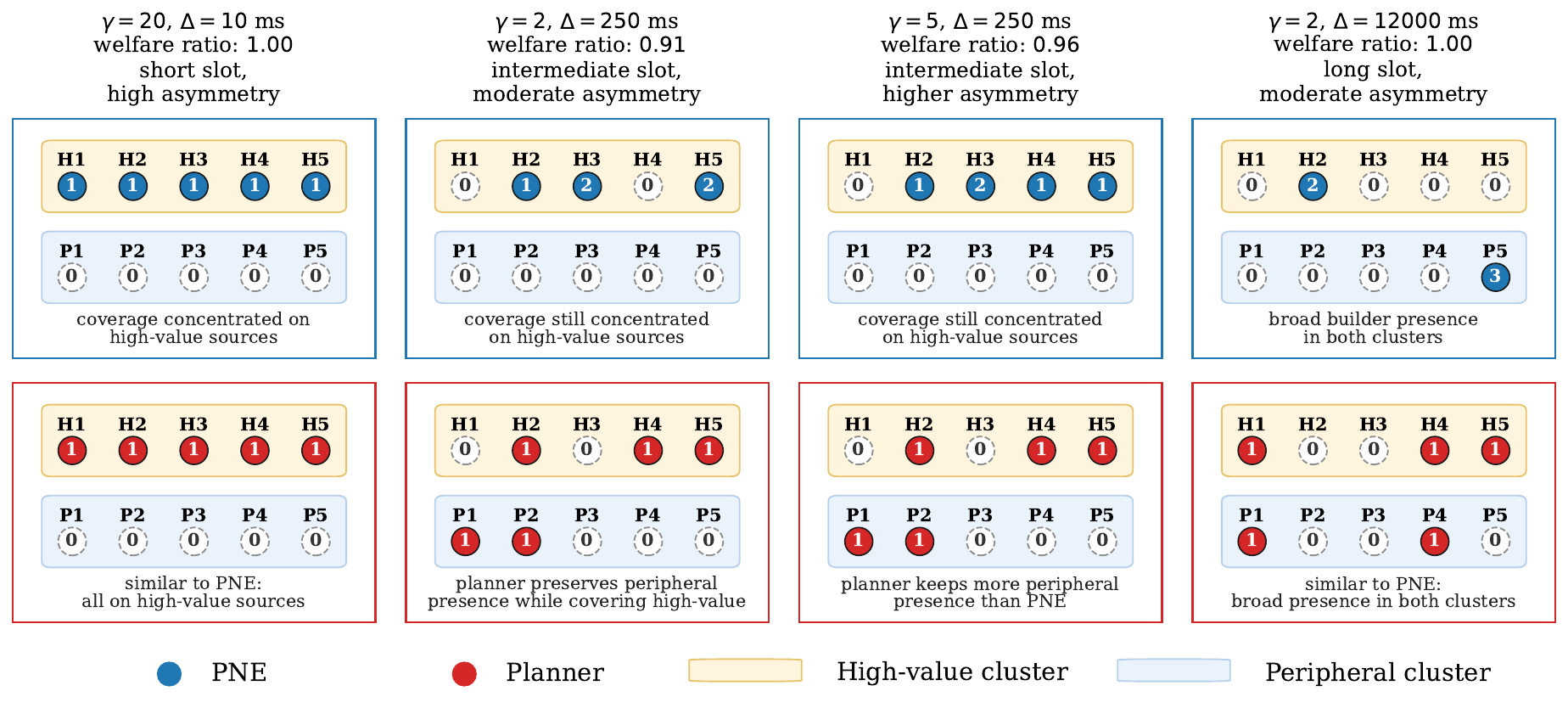}
    \caption{Representative placement instances for four parameter settings in Experiment~3. In each panel, the top row shows the builder placement in equilibrium, while the bottom row shows the planner placement. Yellow boxes denote high-value source locations, and blue boxes denote peripheral source locations.
    Labels H1--H5 and P1--P5 are shorthand for the five GCP regions in the high-value source cluster and the five GCP regions in the peripheral source cluster, respectively.
    The number inside each circle reports the number of builders assigned to that location.
    These examples show that the welfare gap can arise when peripheral sources contribute to meaningful value and are reachable from certain regions but are uncovered in equilibrium.
    }
    \label{fig:exp3-allocation}
\end{figure}

\parhead{Experiment~4: builder participation}
\Cref{fig:exp4-allocation} visualizes the placement pattern behind Experiment~4. Because this experiment uses a greedy planner benchmark rather than the exact exhaustive search, the planner's builder placement should be interpreted as the greedy benchmark profile. As the number of builders $K$ increases, builders remain strongly incentivized to place themselves in regions with better access to high-value sources. Once high-value coverage is nearly saturated, additional builders can therefore create redundant coverage. The planner instead allocates some of the additional builders to expand peripheral coverage. This placement difference explains why additional builder participation does not automatically produce broad source coverage in equilibrium.
% shows the regional placement pattern as the number of builders $K$ varies.
% Since this experiment reports the greedy planner benchmark rather than the exact planner benchmark, the planner placement pattern differs slightly from the previous figures when $K=5$.
% Nevertheless, the main trend remains clear.
% As $K$ increases, the ABR-selected PNE increasingly co-locates builders in high-value source regions, while the planner assigns additional builders to peripheral source regions once high-value coverage is largely saturated.
% This placement difference explains why the welfare gap grows for larger builder counts: the extra builders improve coverage under the planner benchmark, but under the ABR-selected PNE, they can produce redundant coverage around already attractive high-value regions.

\begin{figure}[htbp]
    \centering
    \includegraphics[width=\linewidth]{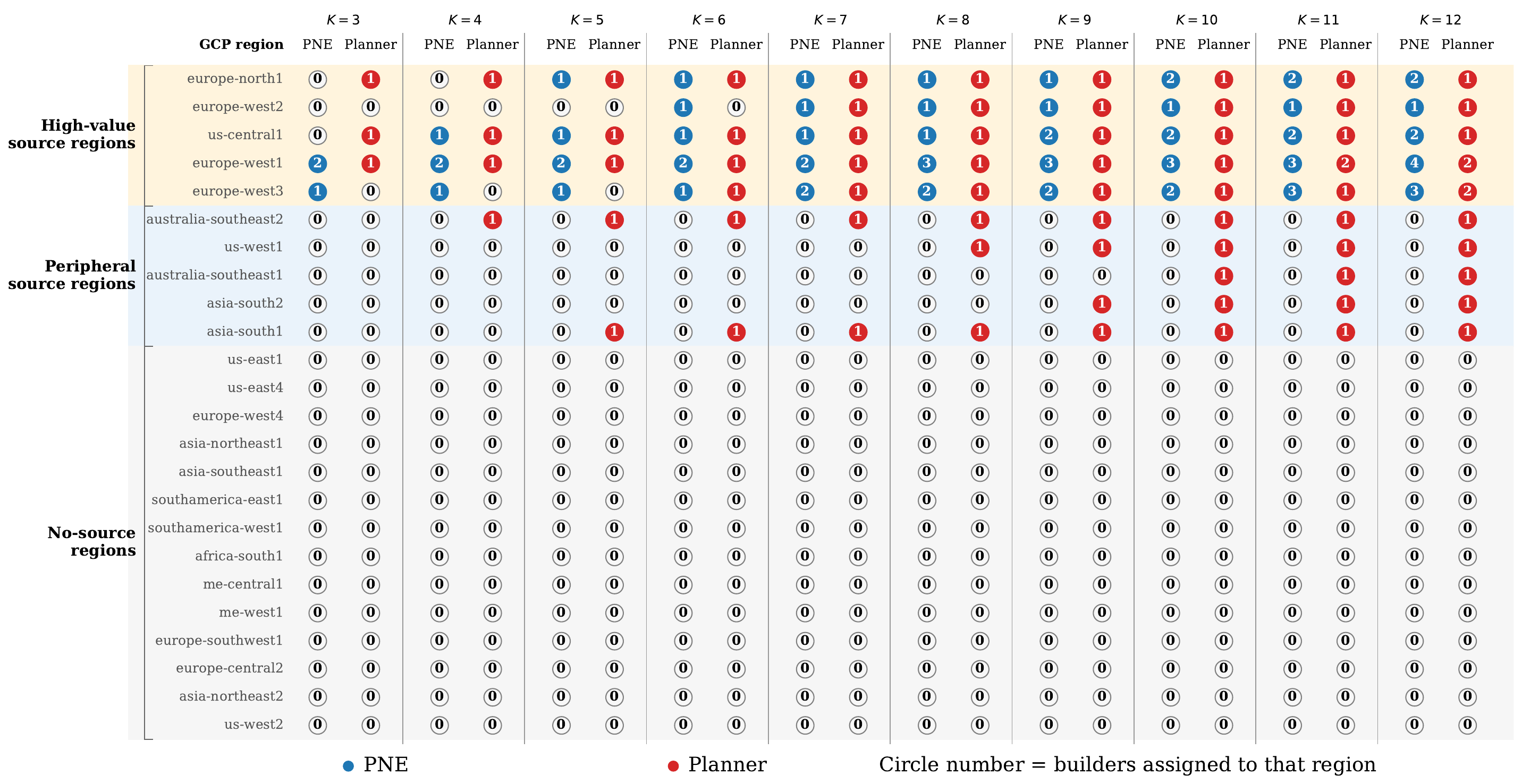}
    \caption{Builder placement by GCP region in equilibrium and the planner benchmark as the number of builders $K$ varies from 3 to 12. Builders are strongly incentivized to locate in high-value source regions in equilibrium, while the planner assigns additional builders to peripheral source regions once high-value coverage is largely saturated.}
    \label{fig:exp4-allocation}
\end{figure}
\fi

\end{document}